\spnewtheorem{axiom}{Axiom}{\bfseries}{\itshape}
\newenvironment{pic}[1][]
{\begin{aligned}\begin{tikzpicture}[#1]}
{\end{tikzpicture}\end{aligned}}
\newcommand{\edges}[1][]%
{
}
\def\calign@preamble{%
   &\hfil\strut@
    \setboxz@h{\@lign$\m@th\displaystyle{##}$}%
    \ifmeasuring@\savefieldlength@\fi
    \set@field
    \hfil
    \tabskip\alignsep@
}
\let\cmeasure@\measure@
\patchcmd\cmeasure@{\divide\@tempcntb\tw@}{}{}{}
\patchcmd\cmeasure@{\divide\@tempcntb\tw@}{}{}{}
\patchcmd\cmeasure@{\ifodd\maxfields@
  \global\advance\maxfields@\@ne
  \fi}{}{}{}
\newenvironment{calign}
{%
  \let\align@preamble\calign@preamble
  \let\measure@\cmeasure@
  \align
}
{%
  \endalign
}
\newcommand\tinymatrix[1]
\renewcommand\thickspace{\kern2pt} \scriptstyle\begin{smallmatrix} #1 \end{smallmatrix} \hspace{-2pt} \right)}
\newcommand\ignore[1]{}
\tikzset{smallbox/.style={draw, fill=white, minimum height=0.45cm, minimum width=0.45cm, inner sep=-100pt}}
    \gdef\node@@on@layer{%
      \setbox\tikz@tempbox=\hbox\bgroup\pgfonlayer{#1}\unhbox\tikz@tempbox\endpgfonlayer\egroup}
\def\node@on@layer{\aftergroup\node@@on@layer}
\def\thickness{0.7pt}
\tikzstyle{oldmorphism}=[minimum width=30pt, minimum height=16pt, draw, font=\small, inner sep=0pt, fill=white, line width=\thickness]
\tikzstyle{cross}=[preaction={draw=white, -, line width=10pt}]
\tikzstyle{braid}=[double=black, line width=3*\thickness, double distance=\thickness, white]
\tikzstyle{string}=[line width=\thickness]
\tikzstyle{scalar}=[circle, inner sep=0pt, minimum width=15pt, draw, line width=\thickness]
\tikzstyle{dot}=[circle, draw=black, fill=black!25, inner sep=.4ex, line width=\thickness, node on layer=foreground]
\tikzstyle{blackdot}=[circle, draw=black, fill=black!100, inner sep=.4ex, line width=\thickness, node on layer=foreground]
\tikzstyle{graydot}=[circle, draw=black, fill=gray!40!white, inner sep=.4ex, line width=\thickness, node on layer=foreground]
\tikzstyle{whitedot}=[circle, draw=black, fill=white, inner sep=.4ex, line width=\thickness, node on layer=foreground]
\tikzstyle{mixedmorphism}=[morphism, minimum width=30pt, minimum height=16pt, draw, font=\small, inner sep=0pt, fill=white, line width=\thickness,rounded corners=1ex]
\tikzstyle{thick}=[line width=\thickness]
\tikzstyle{tiny}=[font=\tiny]
\tikzset{arrow/.style={decoration={
    markings,
    mark=at position #1 with \arrow{thickarrow}},
    postaction=decorate}
}
\tikzset{reverse arrow/.style={decoration={
    markings,
    mark=at position #1 with \arrow{reversethickarrow}},
    postaction=decorate}
}
\newif\ifblack\pgfkeys{/tikz/black/.is if=black}
\newif\ifwedge\pgfkeys{/tikz/wedge/.is if=wedge}
\newif\ifvflip\pgfkeys{/tikz/vflip/.is if=vflip}
\newif\ifhflip\pgfkeys{/tikz/hflip/.is if=hflip}
\newif\ifhvflip\pgfkeys{/tikz/hvflip/.is if=hvflip}
\newif\ifconnectnw\pgfkeys{/tikz/connect nw/.is if=connectnw}
\newif\ifconnectne\pgfkeys{/tikz/connect ne/.is if=connectne}
\newif\ifconnectsw\pgfkeys{/tikz/connect sw/.is if=connectsw}
\newif\ifconnectse\pgfkeys{/tikz/connect se/.is if=connectse}
\newif\ifconnectn\pgfkeys{/tikz/connect n/.is if=connectn}
\newif\ifconnects\pgfkeys{/tikz/connect s/.is if=connects}
\newif\ifconnectnwf\pgfkeys{/tikz/connect nw >/.is if=connectnwf}
\newif\ifconnectnef\pgfkeys{/tikz/connect ne >/.is if=connectnef}
\newif\ifconnectswf\pgfkeys{/tikz/connect sw >/.is if=connectswf}
\newif\ifconnectsef\pgfkeys{/tikz/connect se >/.is if=connectsef}
\newif\ifconnectnf\pgfkeys{/tikz/connect n >/.is if=connectnf}
\newif\ifconnectsf\pgfkeys{/tikz/connect s >/.is if=connectsf}
\newif\ifconnectnwr\pgfkeys{/tikz/connect nw </.is if=connectnwr}
\newif\ifconnectner\pgfkeys{/tikz/connect ne </.is if=connectner}
\newif\ifconnectswr\pgfkeys{/tikz/connect sw </.is if=connectswr}
\newif\ifconnectser\pgfkeys{/tikz/connect se </.is if=connectser}
\newif\ifconnectnr\pgfkeys{/tikz/connect n </.is if=connectnr}
\newif\ifconnectsr\pgfkeys{/tikz/connect s </.is if=connectsr}
\tikzset{keylengthnw/.initial=\connectheight}
\tikzset{keylengthn/.initial =\connectheight}
\tikzset{keylengthne/.initial=\connectheight}
\tikzset{keylengthsw/.initial=\connectheight}
\tikzset{keylengths/.initial =\connectheight}
\tikzset{keylengthse/.initial=\connectheight}
\tikzset{connect nw length/.style={connect nw=true, keylengthnw={#1}}}
\tikzset{connect n length/.style ={connect n =true, keylengthn ={#1}}}
\tikzset{connect ne length/.style={connect ne=true, keylengthne={#1}}}
\tikzset{connect sw length/.style={connect sw=true, keylengthsw={#1}}}
\tikzset{connect s length/.style ={connect s =true, keylengths ={#1}}}
\tikzset{connect se length/.style={connect se=true, keylengthse={#1}}}
\tikzset{connect nw < length/.style={connect nw <=true, keylengthnw={#1}}}
\tikzset{connect n < length/.style ={connect n <=true,  keylengthn ={#1}}}
\tikzset{connect ne < length/.style={connect ne <=true, keylengthne={#1}}}
\tikzset{connect sw < length/.style={connect sw <=true, keylengthnw={#1}}}
\tikzset{connect s < length/.style ={connect s <=true,  keylengths ={#1}}}
\tikzset{connect se < length/.style={connect se <=true, keylengthse={#1}}}
\tikzset{connect nw > length/.style={connect nw >=true, keylengthnw={#1}}}
\tikzset{connect n > length/.style ={connect n >=true,  keylengthn ={#1}}}
\tikzset{connect ne > length/.style={connect ne >=true, keylengthne={#1}}}
\tikzset{connect sw > length/.style={connect sw >=true, keylengthsw={#1}}}
\tikzset{connect s > length/.style ={connect s >=true,  keylengths ={#1}}}
\tikzset{connect se > length/.style={connect se >=true, keylengthse={#1}}}
\newlength\morphismheight
\newlength\minimummorphismwidth
\newlength\stateheight
\newlength\minimumstatewidth
\newlength\connectheight
\tikzset{width/.initial=\minimummorphismwidth}
  \let\thickness=\pgfmathresult
\tikzset{forward arrow style/.style={every to/.style, decoration={
    markings,
    mark=at position 0.5 with \arrow{thickarrow}},
    postaction=decorate}}
\tikzset{reverse arrow style/.style={every to/.style, decoration={
    markings,
    mark=at position 0.5 with \arrow{reversethickarrow}},
    postaction=decorate}}
\newcommand{\tinycomult}[1][dot]{
\smash{\raisebox{-1.8pt}{\hspace{-5pt}\ensuremath{\begin{pic}[scale=0.34,string]
    \node (0) at (0,0) {};
    \node[#1, inner sep=1.5pt] (1) at (0,0.55) {};
    \node (2) at (-0.5,1) {};
    \node (3) at (0.5,1) {};
    \draw (0.center) to (1.center);
    \draw (1.center) to [out=left, in=down, out looseness=1.5] (2.center);
    \draw (1.center) to [out=right, in=down, out looseness=1.5] (3.center);
\end{pic}
}\hspace{-3pt}}}}
\newcommand{\tinycounit}[1][dot]{
\smash{\raisebox{-1.8pt}{\ensuremath{\hspace{-3pt}\begin{pic}[scale=0.34,string]
        \node (0) at (0,0) {};
        \node (1) at (0,1) {};
        \node[#1, inner sep=1.5pt] (d) at (0,0.55) {};
        \draw (0.center) to (d.center);
    \end{pic}
    \hspace{-1pt}}}}}
\newcommand{\tinymult}[1][dot]{
\smash{{\hspace{-5pt}\ensuremath{\begin{aligned}\begin{tikzpicture}[scale=0.36,thick,yscale=-1]
    \node (0) at (0,0) {};
    \node (2) at (-0.5,1) {};
    \node (3) at (0.5,1) {};
    \draw (0.center) to (0,0.55);
    \draw (0,0.55) to [out=left, in=down, out looseness=1.5] (2.center);
    \draw (0,0.55) to [out=right, in=down, out looseness=1.5] (3.center);
    \node[#1, inner sep=1.5pt] (1) at (0,0.55) {};
\end{tikzpicture}\end{aligned}
}\hspace{-3pt}}}}
\newcommand{\tinyunit}[1][dot]{
\smash{{\ensuremath{\hspace{-3pt}\begin{aligned}\begin{tikzpicture}[scale=0.4,thick,yscale=-1]
        \node (0) at (0,0) {};
        \node (1) at (0,1) {};
        \draw (0.center) to (0,0.55);
        \node[#1, inner sep=1.5pt] (d) at (0,0.55) {};
    \end{tikzpicture}\end{aligned}
    \hspace{-1pt}}}}}
\newcommand{\tinyhandle}[1][dot]{\raisebox{-2pt}{\ensuremath{\hspace{-3pt}\begin{pic}[scale=0.4,string]
        \node (0) at (0,0) {};
        \node[dot, inner sep=1.0pt] (1) at (0,0.3) {};
        \node[dot, inner sep=1.0pt] (2) at (0,0.7) {};
        \node (3) at (0,1) {};
        \draw (0.center) to (1.center);
        \draw (2.center) to (3.center);
        \draw[in=180, out=180, looseness=2] (1.center) to (2.center);
        \draw[in=0, out=0, looseness=2] (1.center) to (2.center);
\end{pic}\hspace{-1pt}}}}
\newcommand\cat[1]{\ensuremath{\mathbf{#1}}}
\newcommand\ket[1]{\ensuremath{| #1 \rangle}}
\newcommand\bra[1]{\ensuremath{\langle #1 |}}
\newcommand{\idm}[1]{\mbox{id}_{#1}}
\def\swangle{-145}
\def\seangle{-35}
\def\nwangle{145}
\def\neangle{35}
\tikzstyle{dot}=[inner sep=0.7mm,minimum width=0pt,minimum
\tikzstyle{black dot}=[dot,fill=black]
\tikzstyle{white dot}=[dot,fill=white]
\tikzstyle{gray dot}=[dot,fill=gray!40!white]
\newcommand\whitecomonoid[1]{\ensuremath{(#1,\tinycomult[whitedot],\tinycounit[whitedot])}}
\newcommand\blackcomonoid[1]{\ensuremath{(#1,\tinycomult[blackdot],\tinycounit[blackdot])}}
\newcommand\graycomonoid[1]{\ensuremath{(#1,\tinycomult[graydot],\tinycounit[graydot])}}
\begin{document}

\title{Models of Quantum Algorithms in Sets and Relations}
\author{William Zeng}
\institute{Department of Computer Science, University of Oxford, Oxford, UK \\
\email{william.zeng@cs.ox.ac.uk}}
\date{\today}
\maketitle

\begin{abstract}
        We construct abstract models of blackbox quantum algorithms using a model of quantum computation in sets and relations, a setting that is usually considered for nondeterministic classical computation.  This alternative model of quantum computation (QCRel), though unphysical, nevertheless faithfully models its computational structure.  Our main results are models of the Deutsch-Jozsa, single-shot Grovers, and GroupHomID algorithms in QCRel. These results provide new tools to analyze the semantics of quantum computation and improve our understanding of the relationship between computational speedups and the structure of physical theories. They also exemplify a method of extending physical/computational intuition into new mathematical settings.
\keywords{quantum algorithms $\cdot$ programming semantics $\cdot$ groupoids $\cdot$ category theory}
\end{abstract}

\section{Introduction}

Despite almost two decades of research, we still seek new and useful quantum algorithms.  This is of interest in cases where the meaning of useful ranges from ``able to generate experimental evidence against the extended Church-Turing thesis" to ``commercially viable". Better languages, frameworks, and techniques for analyzing the structure of quantum algorithms will aid in these attempts.  One such  programme initiated by Abramsky, Coecke, et al. de-emphasizes the role of Hilbert spaces and linear maps and instead focuses on topological flows of information within quantum-like systems
\cite{coecke-abramsky-cqm,InteractQOb,coecke-pavlovic-2006}. This approach captures all the familiar structure of quantum computation from teleportation  to quantum secret-sharing and locates the particular quantum setting of Hilbert spaces as an instance of more general abstract process theories \cite{qcs-notes,coecke2011categories}.  Recent work has developed the presentation and verification of quantum algorithms such as the Deutsch-Jozsa and Grover algorithms, and the quantum Fourier transform \cite{strongCompFT} in terms of these abstract process theories, finding new generalizations and algorithms \cite{vicary-tqa,zeng-unitary}.

Having grasped the abstract structure at play in the protocols and algorithms of quantum computation, we can conceive of modelling quantum computation in settings other than Hilbert spaces and linear maps.  There are two main thrusts that make this investigation, the subject of this paper, interesting.  The first is to further analyze the structure of quantum computation, advancing our understanding of the relationship between computational speedups and the structure of physical theories. We use the QCRel model defined here to analyze some example quantum algorithms as non-deterministic classical algorithms while preserving their query-complexity (and, in fact, all their abstract structure). The second thrust regards the insights that become available by extending physical/computational intuition into new areas of mathematics. While other toy models of a relational flavor for quantum mechanics have been proposed \cite{ellermanModelQM,discreteQT,modalQT,spekk}, and some even discuss protocols \cite{QCFF_James}, these works have not developed the structures necessary to model quantum algorithms.

The next section of this paper constructs our chosen model of quantum information.  This is the setting of sets and relations, rather than Hilbert spaces and linear maps, and it is introduced by rephrasing the axioms of quantum mechanics. Section 3 introduces a graphical notation for analyzing processes in this setting. Sections 4-9 present the novel contributions of this paper: relational models of unitary oracles, the Deutsch-Jozsa algorithm, the single-shot Grover's algorithm, and the group homomorphism identification algorithm.

\subsubsection{Acknowledgements}
The author would especially like to acknowledge the useful discussions and encouragement from Bob Coecke, Chris Heunen, Jamie Vicary and anonymous reviewers as well as funding support from The Rhodes Trust and AFOSR grant FA9550-14-1-0079.

\section{The Model of Quantum Computation in Relations}

We begin by defining the key components of quantum computation in this new setting, e.g. systems, states, bases, etc.  The following definitions are motivated by examples from \cite{coecke-abramsky-cqm,coecke2011categories,evans2009classifying} that are summarized in \cite{qcs-notes,cqm-notes}, whose general theorems prove useful. To avoid distracting repetition of notation, we use generic terminology to refer to the relational setting within this paper.  For example \emph{system} is intended to mean \emph{relational system}, i.e. a set.  When we wish to refer to the quantum setting we explicitly denote this e.g. \emph{quantum system} refers to a finite dimensional Hilbert space.

\begin{axiom}
A \emph{system} is a set $H$ with \emph{states} $|\psi\rangle$ given by subsets $\psi\subseteq H$.
\end{axiom}

\noindent Each state in our notation is a boolean column vector written as a labelled ket, to follow the convention in quantum mechanics where states are complex valued column vectors as in the following example.

\begin{example}
\label{ex:columnvect}
Consider a three element system $\{0,1,2\}$, the relation $R=\{(0,0),(0,2),(1,1)\}$ and the state $\ket{\psi}=\{0\}$. In terms of boolean matrices and vectors the composition $R\circ\ket{\psi}$ is written as:
\begin{equation}
\left(\begin{array}{ccc}
1 & 0 & 0 \\
0 & 1 & 0 \\
1 & 0 & 0 \\
\end{array}\right)
\left(\begin{array}{c}
1 \\
0 \\
0 \\
\end{array}\right)
=
\left(\begin{array}{c}
1 \\
0 \\
1 \\
\end{array}\right).
\end{equation}
\end{example}
The state $|\psi\vee\phi\rangle$ has elements in the union of sets $\psi$ and $\phi$. We often use $\ket{\psi}$ to mean the relation $\{\bullet\}\to H$ that relates the singleton set to all the elements in $\psi$.

\begin{axiom}
A \emph{composite system} $H$ of $n$ subsystems is given by the Cartesian product so that $H = H_1\times...\times H_n$. \emph{Composite states} are any subset of $H$.
\end{axiom}

\begin{definition}
For a relation $R:A\to B$ from set $A$ to $B$, the \emph{converse relation} is denoted $R^{-1}:B\to A$ where for $x\in A$ and $y\in B$, $xRy$ if and only if $yR^{-1}x$.
\end{definition}

\noindent The converse replaces the $\dagger$-adjoint in quantum mechanics. This leads to:

\begin{definition}
A relation $R:H_1\to H_2$ is \emph{unitary} if and only if $R\circ R^{-1} = \mbox{id}_{H_1}$ and $R^{-1}\circ R = \mbox{id}_{H_2}$, where $Q\circ R$ means $Q$ after $R$.
\end{definition}

\noindent This is the relational analog to the usual unitarity of linear maps in quantum mechanics and has an obvious interpretation:

\begin{corollary}
\label{cor:bijections}
Relations are unitary if and only if they are bijections.
\end{corollary}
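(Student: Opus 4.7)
My plan is to unfold the definition of unitarity from the previous definition into four elementary set-theoretic conditions on the underlying relation, and observe that these are exactly the standard conditions (totality, single-valuedness, left-uniqueness, right-uniqueness) characterizing a bijective function $H_1 \to H_2$. Since the statement is a biconditional, I treat the two directions separately, but both amount to routine verification once the compositions are expanded pointwise.

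For the ($\Leftarrow$) direction, assume $R$ is a bijection, regarded as a function $f\colon H_1 \to H_2$ whose inverse as a function coincides with $R^{-1}$ as a relation. Then for any $x \in H_1$, the pair $(x,x') \in R^{-1} \circ R$ iff there is some $y$ with $xRy$ and $x'Ry$; by single-valuedness and injectivity this forces $y = f(x) = f(x')$ and hence $x = x'$, giving $R^{-1} \circ R = \mathrm{id}_{H_1}$. The other equality $R \circ R^{-1} = \mathrm{id}_{H_2}$ follows symmetrically using surjectivity.

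For the ($\Rightarrow$) direction, assume both composition identities. Expanding $R^{-1} \circ R$ at $(x,x')$: there exists $y \in H_2$ with $xRy$ and $x'Ry$. Equating this relation to $\mathrm{id}_{H_1}$ yields two statements: taking $x = x'$ shows that every $x \in H_1$ has some $y$ with $xRy$ (totality of $R$), and taking the case $x \neq x'$ shows that no two distinct elements of $H_1$ share an $R$-image (left-uniqueness, i.e.\ $R$ is injective). Dually, $R \circ R^{-1} = \mathrm{id}_{H_2}$ gives surjectivity of $R$ onto $H_2$ together with right-uniqueness (if $xRy$ and $xRy'$ then $y = y'$, so $R$ is single-valued). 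A total, single-valued, injective, surjective relation is precisely a bijection $H_1 \to H_2$.

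The only point requiring care is the bookkeeping of the direction of composition: with $R\colon H_1 \to H_2$ and $R^{-1}\colon H_2 \to H_1$, the composite $R^{-1} \circ R$ has source and target $H_1$ and $R \circ R^{-1}$ has source and target $H_2$, so each identity must be matched to the correct side when extracting the four conditions. There is no real obstacle beyond this bookkeeping, which is why the result is stated as a corollary rather than a proposition.
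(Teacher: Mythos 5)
Your proof is correct and is exactly the routine unpacking the paper has in mind: the paper states this as an immediate corollary of the definition of unitarity and offers no proof, and your pointwise expansion of the two composites into totality, single-valuedness, injectivity, and surjectivity is the standard way to fill that in. The only remark worth adding is that the paper's definition, read literally with the ``$Q\circ R$ means $Q$ after $R$'' convention, attaches the identity subscripts to the wrong sets (a typo); your bookkeeping paragraph silently corrects this, which is the right call.
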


\begin{axiom}
\emph{Evolution} of systems is given by unitary relations.
\end{axiom}

\noindent This means that states of system $A$ can evolve to states of system $B$ if and only if there is a bijection between them. Note that this implies that there do not exist physical evolutions between systems of different cardinality. 

\begin{definition}
For a state $\ket{\psi}:\{\bullet\}\to H$, denote its relational converse as $\bra{\psi}:H\to\{\bullet\}$ called its \emph{effect}.
\end{definition}

\noindent A state preparation followed by an effect amounts to an experiment with a post-selected outcome. Effects are maps to $\{\bullet\}$ that return whether the outcome state $\ket{\psi}$ is possible.
 We give an example to illustrate:
\begin{example}
The preparation of the state $\ket{\phi}$ followed by a post-selected measurement of the effect $\bra{\psi}$ is given by the relation
\begin{align*}
\langle\psi|\phi\rangle:=\bra{\psi}\circ\ket{\phi}:\{\bullet\}\to H\to \{\bullet\}
\end{align*}
This is either the identity relation that we interpret to mean a measurement outcome of $\bra{\psi}$ is \emph{possible}, or it is the empty relation that we interpret to mean the measurement outcome $\bra{\psi}$ is \emph{impossible}. It is clear that the outcome $\bra{\psi}$ is possible if there exists some element of $H$ in both $\psi$ and $\phi$. Otherwise it is impossible.
In this sense our relational quantum computation is a deterministic model of quantum computation.
\end{example}

This interpretation allows us to define a generalized version of the Born rule\footnote{In quantum theory, the Born rule gives the probability of measuring the outcome state $\ket{\psi}$ following preparation in state $\ket{\phi}$ as $|\langle\psi|\phi\rangle|^2$ where $\langle\psi|\phi\rangle:\mathbb{C}\to\mathbb{C}$ is the inner product of the two state vectors \cite{nielsen2010quantum}.} to describe measurement in our model.
\begin{axiom}[Generalized Born Rule]
\label{ax:born}
The possibility of measuring the state $\ket{\psi}$, having prepared state $\ket{\phi}$, is given by the image of:
\begin{align}
\langle\psi|\phi\rangle:\{\bullet\}\to\{\bullet\}
\end{align}
\end{axiom}

In the relational model, bases are characterized as particular generalizations of groups known as \textit{groupoids} \cite{cqm-notes,pavlovic-2009}.  Groupoids can be viewed as groups where multiplication is relaxed to be a partial function.

\begin{definition}
\label{def:basis}
For a system $H$, a \emph{basis} $Z$ is a direct sum (disjoint union) of abelian groups $Z = G_0\oplus G_1\oplus...$ where $|Z| = |H|$.
Multiplication with respect to this list of groups will be written as $\bullet_Z$ and is defined in the following way. For elements $x,y\in Z$ such that $x\in G_i$ and $y\in G_j$ we have the partial function:
\begin{align}
\label{eq:groupoid_mult}
x\bullet_Zy = \begin{cases}
x +_{G_i} y & i=j \\
\mbox{undefined} & \mbox{otherwise}\\
\end{cases}
\end{align}
\noindent This makes $Z$ an \emph{abelian groupoid} with groupoid multiplication $\bullet_Z$.
\end{definition}

We will sometimes take a categorical perspective on groupoids. A groupoid $Z=\bigoplus^NG_i$ made up $N$ groups is a category whose set of objects is isomorphic to the set of groups $\{G_i\}$ and whose morphisms are elements of $Z$, e.g. $x\in Z$ such that $x\in G_1$ is a morphism $x:G_1\to G_1$.

At first guess, one might be motivated by the intuition that a basis for a system breaks it up into parts, and so a basis would be a partition of $H$.  This is not a bad start, however, bases have additional structure: namely that we can copy, delete and combine them at will.  This idea is used to motivate Definition \ref{def:basis} by abstracting bases to special dagger-commutative Frobenius algebras (Definition \ref{def:classicalstruct}) that we call classical structures \cite{coecke-pavlovic-vicary-2008}.

Classical structures' properties, allowing the copying, deleting, and combining that accompany classical (as opposed to quantum) information, give them this name. The definition of a special dagger-commutative Frobenius algebra in our model is given in Section \ref{section:graphical}, and we can interpret it through pair of lemmas corresponding to the traditional model and the relational model of quantum computation.
\begin{lemma}[\cite{coecke-pavlovic-vicary-2008}]
\label{lem:sdfa-hilb}
The classical structures in the category of finite dimensional Hilbert spaces and linear maps are exactly the orthonormal bases.
\end{lemma}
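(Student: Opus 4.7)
The plan is to prove both directions of this bijection explicitly, with the harder direction reducing to the structure theory of commutative semisimple $\mathbb{C}$-algebras. For the easy direction, given an orthonormal basis $\{|i\rangle\}_{i=1}^n$ of $H$, I define the comultiplication and counit as the copying and deleting maps, $\delta|i\rangle := |i\rangle\otimes|i\rangle$ and $\epsilon|i\rangle := 1$, extended linearly; taking daggers yields the corresponding multiplication $\mu = \delta^\dagger$ and unit $\eta = \epsilon^\dagger$. Each of the coassociativity, cocommutativity, counit, specialness $\mu\circ\delta = \mathrm{id}$, and Frobenius axioms reduces to an identity on basis vectors and is verified by direct calculation.

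For the converse, start with a classical structure $(H,\delta,\epsilon)$ and set $\mu := \delta^\dagger$, $\eta := \epsilon^\dagger$. The (co)monoid axioms make $A := (H,\mu,\eta)$ a finite-dimensional commutative associative unital algebra over $\mathbb{C}$. The Frobenius law implies that the trace form $\kappa(x,y) := \epsilon(\mu(x\otimes y))$ is nondegenerate and invariant, so $A$ is a commutative Frobenius algebra over $\mathbb{C}$; nondegeneracy combined with commutativity forces the Jacobson radical to vanish, hence $A$ is semisimple. Artin--Wedderburn then yields $A \cong \mathbb{C}^n$ as algebras, with primitive orthogonal idempotents $p_1,\ldots,p_n$ summing to $\eta(1)$.

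To finish, one shows that the $p_i$ form an orthonormal basis in the Hilbert-space sense. Each $p_i$ is a copyable classical point, $\delta(p_i) = p_i\otimes p_i$: the Frobenius law makes $\delta$ an $A$-bimodule map from $A$ to $A\otimes A$, which together with $\epsilon(p_i) = 1$ forces this identity on primitive idempotents (the only bimodule-compatible, counit-preserving options are the tensor squares $p_j\otimes p_j$, and commuting with multiplication by $p_i$ selects $j=i$). Orthogonality of the $p_i$ in the Hilbert inner product then follows from a short diagrammatic computation rewriting $\langle p_i|p_j\rangle = p_i^\dagger\circ p_j$ using copyability and the dagger-Frobenius equations, and specialness $\mu\circ\delta = \mathrm{id}$ applied at $p_i$ yields the normalization $\langle p_i|p_i\rangle = 1$. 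The count $n = \dim H$ then guarantees spanning.

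The main obstacle is bridging the purely algebraic decomposition $A \cong \mathbb{C}^n$ with the dagger/Hilbert-space structure — showing that the primitive idempotents of the underlying algebra coincide with an orthonormal basis rather than merely a linearly independent set. Specialness is precisely the axiom that couples the multiplicative and inner-product structures to furnish the correct normalization, while the dagger ensures that the algebraic trace form controls the Hilbert inner product.
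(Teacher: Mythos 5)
The paper offers no proof of this lemma; it is quoted from \cite{coecke-pavlovic-vicary-2008}, and your outline follows the same overall architecture as that source (easy direction by direct verification on basis vectors; converse by showing the algebra splits as $\mathbb{C}^n$ and that the resulting idempotents are copyable, orthogonal, and normalized). However, one step in your converse direction is genuinely wrong as justified: you claim that ``nondegeneracy combined with commutativity forces the Jacobson radical to vanish.'' That is false. The algebra $\mathbb{C}[x]/(x^2)$ with the form $\kappa(a+bx,\,c+dx)=ad+bc$ is a commutative Frobenius algebra with nondegenerate invariant trace form, yet its radical is $(x)\neq 0$; commutative Frobenius algebras need not be semisimple, so Artin--Wedderburn cannot be invoked at this point and the rest of the argument collapses.

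Semisimplicity must instead come from one of the two hypotheses you have not yet used at that stage. Either use \emph{specialness}: $\mu\circ\delta=\mathrm{id}$ exhibits $\delta(1)$ as a separability idempotent, and separable algebras over a field are semisimple (note that $\mathbb{C}[x]/(x^2)$ above is not special, which is exactly why it evades this route). Or use the \emph{dagger}: the compatibility $\mu=\delta^\dagger$ together with positive-definiteness of the Hilbert-space inner product equips $A$ with an involution making it a finite-dimensional commutative C*-algebra, hence $\cong\mathbb{C}^n$; this is the route taken in \cite{coecke-pavlovic-vicary-2008}, and it has the added benefit of making the primitive idempotents self-adjoint, which you implicitly need later when deducing orthogonality $\langle p_i|p_j\rangle=0$ for $i\neq j$ (the copyability computation alone only gives $\langle p_i|p_j\rangle\in\{0,1\}$). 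Relatedly, your assertion $\epsilon(p_i)=1$ is not automatic for an arbitrary Frobenius counit and should be derived, e.g.\ from specialness after identifying the $p_i$ with the copyable states. With the semisimplicity step repaired along either line, the rest of your sketch goes through.
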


\begin{lemma}[\cite{evans2009classifying,pavlovic-2009}]
\label{lem:sdfa-rel}
The classical structures in the category of sets and relations are exactly the abelian groupoids.\footnote{In \cite{heunen-relFrob} this connection is extended to the non-abelian case where it is shown that all relative Frobenius algebras are groupoids.}
\end{lemma}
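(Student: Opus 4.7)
The proof is an ``if and only if'' so I would handle the two directions separately.

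For the easier \emph{(groupoids give classical structures)} direction, I would start from an abelian groupoid $Z = \bigoplus_i G_i$ and construct the candidate Frobenius structure explicitly: the multiplication $\mu : Z \times Z \to Z$ is the graph of $\bullet_Z$ (a partial function, hence a relation), the unit $\eta : \{\bullet\} \to Z$ picks out the set $\{e_i\}_i$ of identities of the component groups, and the comultiplication and counit are the relational converses $\mu^{-1}$ and $\eta^{-1}$ (which the dagger requires). Then I would check each Frobenius algebra axiom in turn. Associativity and the unit law are immediate from the groupoid axioms; commutativity is where the abelian hypothesis enters; coassociativity and the counit law then follow by taking converses. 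For the Frobenius law $(\mu \times \id) \circ (\id \times \mu^{-1}) = \mu^{-1} \circ \mu = (\id \times \mu) \circ (\mu^{-1} \times \id)$, I would unpack each side as a relation on $Z \times Z$ and observe that both describe the set of triples $(x,y,z,w)$ with $x,y,z,w$ all lying in the same group $G_i$ and $x \bullet y = z \bullet w$; the key fact used is that the $G_i$ are cancellative groups. Specialness $\mu \circ \mu^{-1} = \id_Z$ amounts to checking that every $z \in Z$ has at least one decomposition $z = x \bullet y$, which is trivial since $z = e_i \bullet z$ when $z \in G_i$.

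For the harder \emph{(classical structures are groupoids)} direction, given a special dagger-commutative Frobenius algebra $(Z, \mu, \eta)$ in $\cat{Rel}$, I would define a candidate binary partial operation $\bullet_Z$ on $Z$ by declaring $x \bullet_Z y = z$ precisely when $((x,y), z) \in \mu$. The main obstacle is showing $\mu$ is actually a partial function, i.e.\ $x \bullet_Z y$ is uniquely determined when it exists; I expect to derive this from the Frobenius law combined with specialness, using that $\mu^{-1} \circ \mu$ factors through $\id \otimes \mu$ and $\mu \otimes \id$ in a way that forces multiplicities to collapse. Once $\bullet_Z$ is a partial function, commutativity follows from commutativity of the Frobenius algebra. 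The identities of the groupoid should be read off from the image of $\eta$, and inverses should come from the ``antipode'' built from Frobenius: the map $Z \to Z$ defined as $(\varepsilon \otimes \id) \circ (\mu \otimes \id) \circ (\id \otimes \eta^{\dag}\text{-type cap})$, i.e.\ the standard way to extract inverses from a special dagger-Frobenius algebra.

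The last step is to show the resulting groupoid decomposes as a disjoint union of abelian groups. I would define an equivalence relation on $Z$ by $x \sim y$ iff $x \bullet_Z y^{-1}$ is defined (equivalently, iff $x$ and $y$ share a common codomain, viewing $Z$ as a one-object-per-component groupoid), verify it is an equivalence using the Frobenius law, and show each equivalence class inherits a total abelian group structure from the restriction of $\bullet_Z$. The cardinality condition $|Z| = |H|$ in Definition~\ref{def:basis} is automatic since the underlying set of the algebra is $H$ itself.

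The main obstacle, and the reason I would lean heavily on the cited results of Pavlovi\'c and of Heunen--Contreras--Cattaneo in \cite{pavlovic-2009,evans2009classifying}, is precisely the partial-function property in the reverse direction: extracting single-valuedness of $\mu$ out of the purely diagrammatic Frobenius and specialness axioms is the technical heart of the lemma, and is where most of the work in those references is concentrated.
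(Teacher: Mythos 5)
The paper does not prove this lemma at all: it is imported by citation from \cite{evans2009classifying,pavlovic-2009} (with \cite{heunen-relFrob} for the non-abelian case), so there is no in-paper argument to compare against, and your plan of carrying out the easy direction explicitly while deferring the hard direction to those references is exactly the posture the paper takes. Your sketch of the forward direction is correct, including the identification of both sides of the Frobenius law \eqref{eq:frobenius} with the relation $\{((x,y),(z,w)) : x,y,z,w \in G_i,\ x\bullet y = z\bullet w\}$ (though note that for the $\subseteq$ half of specialness \eqref{eq:special} you also need single-valuedness of $\bullet_Z$, not just the existence of a decomposition $z = e_i \bullet z$). The one place where your emphasis is off is in the reverse direction: single-valuedness of $\mu$ is \emph{not} the technical heart and does not require any interplay between the Frobenius law and specialness. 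It falls out immediately from the $\subseteq$ half of specialness alone: $\mu\circ\mu^{\dagger} \subseteq \mathrm{id}$ says precisely that if $((x,y),z)\in\mu$ and $((x,y),z')\in\mu$ then $z=z'$. The genuine work in \cite{pavlovic-2009,heunen-relFrob} lies elsewhere --- in showing that the partial operation's domain of definition is governed by matching sources and targets, in manufacturing local identities and inverses from the unit and the Frobenius cap, and in verifying that commutativity forces every morphism to be an endomorphism (take $y=\mathrm{id}_{\mathrm{dom}(x)}$ in $x\bullet y = y\bullet x$) so that the groupoid decomposes as a disjoint union of abelian vertex groups, as required by Definition~\ref{def:basis}. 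With that relocation of the difficulty, your outline matches the cited proofs.
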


\subsection{Complementarity}
Complementary bases are important features of quantum theory. In the general setting, complementary bases are understood as mutually unbiased bases in a certain sense \cite{InteractQOb}.  In relations there is a more direct characterization:\footnote{Theorem \ref{thm:compl} holds as long as we consider bases to be the same if their lists of groups are isomorphic.}
\begin{theorem}[\cite{evans2009classifying}]
\label{thm:compl}
Two bases $Z$ and $X$ are complementary if and only if they are of the following form. Basis $Z = \bigoplus^{|H|}G$ and basis $X = \bigoplus^{|G|}H$ given by copies of abelian groups $G$ and $H$ respectively.
\end{theorem}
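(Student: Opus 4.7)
The plan is to unpack the graphical definition of complementarity for classical structures (to be given in Section~\ref{section:graphical}) into a concrete set-theoretic condition on the two abelian-groupoid structures on $H$, and then analyze that condition combinatorially using Lemma~\ref{lem:sdfa-rel}. Writing $Z = \bigoplus_i G_i$ and $X = \bigoplus_j H_j$ as direct sums of abelian groups, the underlying sets of the $G_i$ partition $H$, and likewise for the $H_j$; the multiplications $\bullet_Z$ and $\bullet_X$ are only defined on pairs lying in a common block. The first step is to translate complementarity in \textbf{Rel} into the Hopf-law style statement that, for every pair $(a,b) \in H \times H$, the $Z$-block of $a$ and the $X$-block of $b$ must interact in a prescribed way. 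I expect this unpacking to yield the transversality condition that every $Z$-block meets every $X$-block in exactly one element, equivalently that $a \mapsto (\text{$Z$-block of }a,\ \text{$X$-block of }a)$ is a bijection from $H$ to the product of the two partition-indexing sets.

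From transversality the structural conclusion falls out in two steps. Counting gives at once that $|G_i|$ equals the number of $X$-blocks for every $i$, and $|H_j|$ equals the number of $Z$-blocks for every $j$, so the blocks on each side share a common cardinality. To upgrade this to a common \emph{group}, I would fix one $Z$-block $G = G_0$ and use the Latin-square rigidity provided by complementarity to transport its group operation to every other $G_i$ via translation inside the $X$-blocks, obtaining isomorphisms $G_i \cong G$ for all $i$; the symmetric argument handles the $H_j$. A final block count then yields $Z \cong \bigoplus^{|H|} G$ and $X \cong \bigoplus^{|G|} H$. For the converse direction one fixes an indexing bijection $H \cong \{1,\ldots,|H|\} \times G \cong \{1,\ldots,|G|\} \times H$ compatible with both block structures; transversality is built into this indexing, and the unpacked complementarity condition reduces to the obvious fact that $\bullet_Z$ is componentwise addition in $G$ and $\bullet_X$ is componentwise addition in $H$.

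The main obstacle will be promoting the cardinality equality to canonical group isomorphisms $G_i \cong G$. Transversality alone only says the blocks have the same size, which is not enough in the abelian-group setting (for instance $\mathbb{Z}/4$ and $\mathbb{Z}/2 \oplus \mathbb{Z}/2$ have the same cardinality). The proof must therefore use the unit and multiplication of $X$ together with the complementarity equation in an essential way to coherently transport the group structure of $G_0$ across blocks, and it is here that the Hopf-law interaction does the real work.
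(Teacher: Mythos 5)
Your first two steps follow the route that the cited source \cite{evans2009classifying} actually takes (the paper itself does not prove this theorem; it imports it and offers only the one-line gloss that classical states of one basis must be unbiased states of the other): unpacking the complementarity equation for two abelian groupoids does reduce to the transversality condition that every $Z$-block meets every $X$-block in exactly one element, and counting then gives that every $G_i$ has cardinality equal to the number of $X$-blocks and every $H_j$ has cardinality equal to the number of $Z$-blocks. The converse direction as you sketch it is also fine.

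The genuine gap is exactly the step you flag as ``the main obstacle,'' and it cannot be repaired in the form you propose, because once transversality holds the complementarity equation imposes \emph{no further constraint} relating the group structures of distinct blocks. Write the underlying set as $I\times J$ with $Z$-blocks $\{i\}\times J$ carrying an arbitrary abelian group $G_i$ and $X$-blocks $I\times\{j\}$ carrying an arbitrary abelian group $H_j$. Then the generalized controlled-not of Definition~\ref{eq:generalizedcnot} is the map $((i,j),(i',j'))\mapsto((i,\;j-_{G_i}j'),\;(i+_{H_{j'}}i',\;j'))$, which is a bijection no matter which abelian group each block carries, so by Theorem~\ref{thm:complementarityunitary} the pair is complementary. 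Concretely, on an eight-element set the groupoids $\mathbb{Z}_4\oplus(\mathbb{Z}_2\times\mathbb{Z}_2)$ and $\mathbb{Z}_2\oplus\mathbb{Z}_2\oplus\mathbb{Z}_2\oplus\mathbb{Z}_2$, arranged transversally, satisfy the complementarity equation even though the two $Z$-blocks are non-isomorphic groups of order four. Hence there is no ``Latin-square rigidity'' left over to transport the group operation of $G_0$ to the other blocks; the Hopf-law interaction is exhausted by transversality. What your analysis can legitimately deliver is the combinatorial statement (transversal blocks of constant cardinality), which is the content of the cited classification; the stronger ``copies of a single $G$ and a single $H$'' phrasing must be read with the caveat in the theorem's footnote, and any argument that tries to extract literal block isomorphisms from complementarity alone will fail at precisely the point you identified.
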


This theorem follows from the requirement that the classical states of one basis must be isomorphic to the unbiased states of its complement. We will return to this idea in the Section \ref{sec:FT} when we address the quantum Fourier transform. Classical and unbiased states of bases in the relational model are specified in the following definitions that instantiate abstract definitions in \cite{InteractQOb}. An example on the six element system is illustrated with Figure \ref{complEx}.

\begin{figure}[tb]
\begin{center}
\includegraphics[height=10em,natwidth=1091,natheight=468,scale=1]{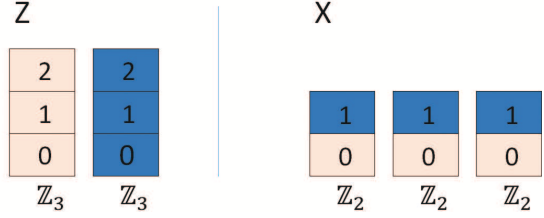}
\end{center}
\vspace{-14pt}
\caption{An example of two complementary bases on the system of six elements. Here $Z=\mathbb{Z}_3\oplus\mathbb{Z}_3$ and $X = \mathbb{Z}_2\oplus\mathbb{Z}_2\oplus \mathbb{Z}_2$.  The two classical states of $Z$ are each three element subsets and are colored in pink and blue. The unbiased states of $X$ to which they correspond are colored to match.
}
\label{complEx}
\end{figure}

\begin{definition}[\cite{evans2009classifying}]
The \emph{classical states} of a basis $Z = \bigoplus^{N}G$ are the subsets corresponding to the groups $G_0, G_1,...$ where we forget the group structure. They will often be denoted $\ket{G_i}$.
\end{definition}

\begin{definition}[\cite{evans2009classifying}]
The \emph{unbiased states} for a basis $Z = \bigoplus^{N}G$ are subsets $U$ such that for a fixed $g\in G$, $\ket{U} = \bigoplus^{N}\{g\}$.
Thus there is exactly one element in each unbiased $U$ from each component $G_i$ of $Z$.
\end{definition}

\begin{example}
Take $Z = \mathbb{Z}_2\oplus\mathbb{Z}_2=\{0_a,1_a,0_b,1_b\}$. The classical states of $Z$ are $\ket{G_a}=\ket{0_a\vee1_a}$ and $\ket{G_b}=\ket{0_b\vee1_b}$.  The unbiased states of $Z$ are $\ket{U_0}=\ket{0_a\vee0_b}$ and $\ket{U_1}=\ket{1_a\vee1_b}$.
\end{example}

It is easy to check that bases as specified by Theorem \ref{thm:compl} have the property that each classical state $\ket{G_i}$ of the basis $Z$ corresponds to one unbiased state of $X$ and vice versa.
This allows us to call these bases mutually unbiased, i.e. complementary \cite{evans2009classifying}.






\subsection{The Model QCRel}
\label{sec:model}
\begin{definition}
Axioms 1-4, and subsequent definitions, specify the abstract process theory for \emph{quantum computation in relations: QCRel}.
\end{definition}

\begin{theorem}
QCRel is a model of quantum computation with sets and unitary relations.
\end{theorem}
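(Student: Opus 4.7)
The plan is to assemble the verification directly from the axioms and the category-theoretic results already cited, since at this point there is essentially nothing to prove beyond checking that the pieces already introduced fit together into the standard abstract process-theoretic framework of \cite{coecke-abramsky-cqm,qcs-notes,coecke2011categories}. Under that framework, "model of quantum computation" means a dagger symmetric monoidal (in fact dagger compact) category equipped with enough classical structures and complementary pairs to interpret unitary oracles, measurements, and the Born rule; I would argue that $\mathbf{Rel}$ with the structures specified by Axioms 1--4 satisfies these requirements.

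First I would identify the underlying category. Axiom 1 makes systems objects of $\mathbf{Rel}$ and states morphisms $\{\bullet\} \to H$; Axiom 2 endows $\mathbf{Rel}$ with Cartesian product as monoidal product and $\{\bullet\}$ as monoidal unit, a structure that is routinely checked to be symmetric monoidal. The converse of Definition 2 supplies an involutive, identity-on-objects, contravariant functor that reverses composition and respects the monoidal structure, so $(\mathbf{Rel}, \times, \{\bullet\}, (-)^{-1})$ is a dagger symmetric monoidal category. Effects are then daggers of states, and the scalar $\langle\psi|\phi\rangle$ lives in the two-element monoid $\mathbf{Rel}(\{\bullet\}, \{\bullet\}) = \{\mathrm{id}, \emptyset\}$, which realizes the Generalized Born Rule of Axiom 4 as a possibility-valued inner product.

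Next I would handle the dynamics. Axiom 3 is interpreted via Definition 3, and by Corollary \ref{cor:bijections} the unitary relations are exactly the bijections, so evolutions compose, are invertible, and form a groupoid on each object; in particular they preserve cardinality, as the remark after Axiom 3 observes. Finally, the basis theory needed for quantum protocols is provided by Definition \ref{def:basis} together with Lemma \ref{lem:sdfa-rel}, which identifies bases in QCRel with the special dagger commutative Frobenius algebras of $\mathbf{Rel}$ -- exactly the abstract counterpart, via Lemma \ref{lem:sdfa-hilb}, of orthonormal bases in $\mathbf{FdHilb}$. Theorem \ref{thm:compl} then ensures that complementary (mutually unbiased) pairs exist whenever the cardinalities allow it, so the structure needed to phrase oracles and phase gates is available.

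The only real obstacle is one of framing rather than calculation: one must be precise about what is meant by "a model of quantum computation." Since the paper adopts the abstract process-theoretic viewpoint of \cite{coecke-abramsky-cqm,qcs-notes}, the theorem is really the statement that Axioms 1--4 and Definitions 1--4 instantiate that framework in $\mathbf{Rel}$, and the above bullet-by-bullet correspondence suffices. A reader expecting a Hilbert-space-style model might object on physical grounds, but as the introduction is careful to point out, QCRel is a computational rather than physical model, and the nontrivial evidence that it genuinely captures quantum computational behaviour is supplied not by this theorem but by the algorithmic constructions of Sections 4--9.
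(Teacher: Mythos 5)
Your proposal is correct and follows essentially the same route as the paper's own proof, which simply observes that the axioms are interpretations of abstract dagger-compact structures and that $\mathbf{Rel}$, like $\mathbf{FHilb}$, is dagger compact; you merely spell out the bullet-by-bullet correspondence that the paper leaves implicit as ``true by construction.''
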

\begin{proof}
This is true by construction.  The axioms on the preceding section can be interpreted as structures in any dagger compact category. In particular, {\bf FHilb}, the category of finite dimensional Hilbert spaces and linear maps,  is a dagger compact category in which interpretation of those axioms results in the usual Hilbert space quantum mechanics \cite{coecke-abramsky-cqm}.  {\bf Rel}, the category of sets and relations, is also dagger compact.  It is the interpretation of the abstract axioms for quantum computation in {\bf Rel}, rather than {\bf FHilb}, that produces QCRel as a model. References that covers the dagger compact abstraction and some of its interpretation in different categories are \cite{coecke2011categories,cqm-notes}.
\end{proof}

It is worth noting that QCRel can be simply viewed as a local hidden variable theory. We consider the set $H$ to be the set of ontic states such that for $\phi\subseteq H$ the state $\ket{\phi}$ is non-deterministically in any of the ontic states in the subset $\phi$.  From this perspective, QCRel provides a non-deterministic local hidden variable model for computational aspects of quantum mechanics \cite{abramsky2012operational}. This means that protocols exist for entanglement, teleportation, and, as we show in this paper, some familiar blackbox algorithms.
\vspace{-10pt}
\section{Graphical Notation}
\label{section:graphical}
In this section, we introduce a simple graphical notation that is commonly used in the literature for abstract process theories \cite{qcs-notes,coecke2011categories}. See~\cite{selinger} for a survey of these diagrams.  This notation will ease the inclusion of higher level proofs in our particular setting. In the context of this paper, this graphical notation acts as a more formal circuit-like model to present protocols and algorithms. For systems $A$ and $B$, we represent the relation $f:A\to B$ as:
\begin{align*}
    \begin{tikzpicture}[yscale=0.6]
                \node (0) at (0, 1) {};
                \node (1) at (0.3, 1) {$B$};
                \node [style=morphism] (2) at (0, -0) {$f$};
                \node (3) at (0, -1) {};
                \node (4) at (0.3, -1) {$A$};
                \draw (2.north) to (0.center);
                \draw (3.center) to (0,-0.45);
    \end{tikzpicture}
\end{align*}

\noindent ``reading" the diagram from bottom to top.  We represent individual systems (the identity morphism on them), sequential composition, states, and composite systems with the following diagrams, where relations are not necessarily unitary:
\begin{align*}
\begin{tabular}{ccc}
    \begin{tikzpicture}[yscale=0.8]
                \node (0) at (0, 1) {};
                \node (2) at (-1, 0) {$\mbox{id}_A = $};
                \node (1) at (0.3, 1) {$A$};
                \node (3) at (0, -1) {};
                \node (4) at (0.3, -1) {$A$};
                \draw (3.center) to (0.center);
    \end{tikzpicture}
& \quad
    \begin{tikzpicture}[yscale=0.8]
                \node (0) at (0, 1) {};
                \node (2) at (-1, 0) {$g\circ f = $};
                \node (1) at (0.5, 1) {$C$};
                \node (3) at (0, -1) {};
                \node [style=morphism] (4) at (0, -0.5) {$f$};
                \node (5) at (0.5, -1) {$A$};
                \node (6) at (0.5, 0) {$B$};
                \node [style=morphism] (7) at (0, 0.5) {$g$};
                \draw (3.center) to (4.south);
                \draw (4.north) to (7.south);
                \draw (7.north) to (0.center);
    \end{tikzpicture}
    & \quad
            \begin{tikzpicture}[yscale=0.8]
                \node (0) at (0, 1) {};
                \node (2) at (-1, 0) {$\ket{\psi} = $};
                \node (1) at (0.3, 1) {$A$};
                \node [morphism] (3) at (0, -1) {$\ket{\psi}$};
                \draw (3.north) to (0.center);
    \end{tikzpicture}
\\
    \begin{tikzpicture}[yscale=0.8]
                \node (0) at (0, 1) {};
                \node (2) at (-1.2, 0) {$f_{1}\times f_{2} = $};
                \node (1) at (0.3, 1) {$B$};
                \node (3) at (0, -1) {};
                \node [style=morphism] (4) at (0, 0) {$f_1$};
                \node (5) at (0.3, -1) {$A$};
                \node (6) at (1.25, 1) {};
                \node (7) at (1.55, 1) {$D$};
                \node (8) at (1.25, -1) {};
                \node [style=morphism] (9) at (1.25, 0) {$f_{2}$};
                \node (10) at (1.55, -1) {$C$};
                \draw (3.center) to (0, -0.35);
                \draw (0,0.3) to (0.center);
                \draw (8.center) to (1.25, -0.35);
                \draw (1.25,0.3) to (6.center);
    \end{tikzpicture}
    & \quad \quad
        \begin{tikzpicture}[yscale=0.8]
                \node (0) at (0, 1) {};
                \node (2) at (-2, 0) {$f_{3}:A\to B\times C = $};
                \node (1) at (0.4, 1) {$C$};
                \node at (-0.2, 1) {$B$};
                \node at (0.1, -1) {$A$};
                \node at (0.1,0) {$f_{3}$};
                \node [morphism, xscale=2] (3) at (0.1, 0) {};
                \draw (0.1,-0.75) to (3.south);
                \draw (-0.2,0.75) to (-0.2,0.3);
                \draw (0.4,0.75) to (0.4,0.3);
    \end{tikzpicture}
    &
\\
\end{tabular}
\end{align*}
The state relation is understood where the missing input wire means a map from the ``empty" diagram which is the set $\{\bullet\}$, so that all relations $\ket{\psi}:\{\bullet\}\to A$ give subsets of $A$.\footnote{In any dagger compact category states are morphisms from the monoidal unit, which, in \textbf{Rel} is the singleton \cite{coecke-abramsky-cqm}.}

\begin{definition}
The \emph{adjoint} of a relation $f:A\to B$ is its relational converse $f^{-1}:B\to A$.
\end{definition}

\noindent This is what motivated our definition of unitary relations and is graphically represented by simply flipping the diagram upside down.

Having introduced this notation, we are now able to collect some standard results from the literature~\cite{InteractQOb}, where they are often defined as more general structures. We include these definitions for use in later proofs, and so present them in terms specific and sufficient for our setting.
\begin{definition}
A \textit{comonoid} is a triple \whitecomonoid{A} of a system $A$, a relation  $\tinycomult[whitedot] : A \to A \times A$ called the comultiplication, and a relation $\tinycounit[whitedot] : A \to \{\bullet\}$ called the counit, satisfying coassociativity and counitality equations:
\def\frobscale{0.5}
\begin{calign}
\begin{aligned}
\begin{tikzpicture}[thick, yscale=-0.4, xscale=0.4]
\draw (0,0) to [out=up, in=\swangle] (0.5, 1);
\draw (1,0) to [out=up, in=\seangle] (0.5,1);
\draw (2,0) to [out=up, in=\seangle] (1.25,2);
\draw (0.5,1) to [out=up, in=\swangle] (1.25, 2);
\draw (1.25,2) to (1.25, 3);
\node [whitedot] at (0.5,1) {};
\node [whitedot] at (1.25,2) {};
\end{tikzpicture}
\end{aligned}
\quad=\quad
\begin{aligned}
\begin{tikzpicture}[xscale=-1, thick, yscale=-0.4, xscale=0.4]
\draw (0,0) to [out=up, in=\swangle] (0.5, 1);
\draw (1,0) to [out=up, in=\seangle] (0.5,1);
\draw (2,0) to [out=up, in=\seangle] (1.25,2);
\draw (0.5,1) to [out=up, in=\swangle] (1.25, 2);
\draw (1.25,2) to (1.25, 3);
\node [whitedot] at (0.5,1) {};
\node [whitedot] at (1.25,2) {};
\end{tikzpicture}
\end{aligned}
&\qquad
\begin{aligned}
\begin{tikzpicture}[thick,  yscale=-0.4, xscale=0.4]
\draw (0,-1.5) to (0,-0.5) to [out=up, in=\swangle] (0.75,0.5) node [whitedot] {} to (0.75,1.5);
\draw (1.5,-0.5) node [whitedot] {} to [out=up, in=\seangle] (0.75,0.5);
\end{tikzpicture}
\end{aligned}
\quad=\quad
\begin{aligned}
\begin{tikzpicture}[thick, yscale=-0.4, xscale=0.4]
\draw (0,0) to (0,3);
\end{tikzpicture}
\end{aligned}
\quad=\quad
\begin{aligned}
\begin{tikzpicture}[thick, xscale=-1,  yscale=-0.4, xscale=0.4]
\draw (0,-1.5) to (0,-0.5) to [out=up, in=\swangle] (0.75,0.5) node [whitedot] {} to (0.75,1.5);
\draw (1.5,-0.5) node [whitedot] {} to [out=up, in=\seangle] (0.75,0.5);
\end{tikzpicture}
\end{aligned}
\end{calign}
\end{definition}

\noindent
In equational form, writing $\delta=\tinycomult[whitedot]$ and $\epsilon=\tinycounit[whitedot]$, these are $(\delta\times \mbox{id}_A)\circ\delta = (\mbox{id}_A\times \delta)\circ\delta$ and $\mbox{id}_A\times\epsilon\circ\delta= \mbox{id}_A= \epsilon\times\mbox{id}_A\circ\delta $. Using the relational converse, we can flip the constraining equations upside down to  obtain the associated monoid $(A, \delta^{-1}=\tinymult[whitedot],\, \epsilon^{-1}=\tinyunit[whitedot])$. We can then ask for the comonoid and monoid to interact in various ways.
\begin{definition}
A comonoid \whitecomonoid{A} and its monoid under the relational converse form a \emph{dagger-Frobenius algebra} when the following equation holds:
\begin{equation}\label{eq:frobenius}
\begin{aligned}
\begin{tikzpicture}[thick, yscale=0.4, xscale=0.4]
    \draw (0,0) to (0,1) to [out=up, in=\swangle] (0.5,2) node [whitedot] {} to (0.5,3);
    \draw (0.5,2) to [out=\seangle, in=\nwangle] (1.5,1) node [whitedot] {};
    \draw (1.5,0) to (1.5,1) to [out=\neangle, in=down] (2,2) to (2,3);
\end{tikzpicture}
\end{aligned}
    \quad = \quad
\begin{aligned}
\begin{tikzpicture}[thick, yscale=-0.4, xscale=0.4]
    \draw (0,0) to (0,1) to [out=up, in=\swangle] (0.5,2) node [whitedot] {} to (0.5,3);
    \draw (0.5,2) to [out=\seangle, in=\nwangle] (1.5,1) node [whitedot] {};
    \draw (1.5,0) to (1.5,1) to [out=\neangle, in=down] (2,2) to (2,3);
\end{tikzpicture}
\end{aligned}
\qquad\qquad
\delta^{-1}\times\mbox{id}_A\circ\mbox{id}_A\times\delta=
\mbox{id}_A\times\delta^{-1}\circ\delta\times\mbox{id}_A
  \end{equation}
\end{definition}
\begin{definition}
\label{def:classicalstruct}
A \emph{classical structure} is a dagger-Frobenius algebra \whitecomonoid{A} satisfying the \emph{specialness} \eqref{eq:special} and \emph{symmetry} \eqref{eq:sym} conditions:
\begin{equation}
\label{eq:special}
\begin{aligned}
\begin{tikzpicture}[thick, yscale=0.4, xscale=0.4]
\draw (0,0.25) to (0,1) node [whitedot] {} to [out=\nwangle, in=down] (-0.5,1.5) to [out=up, in=\swangle] (0,2) node [whitedot] {} to (0,2.75);
\draw (0,1) to [out=\neangle, in=down] (0.5,1.5) to [out=up, in=\seangle] (0,2);
\end{tikzpicture}
\end{aligned}
\quad=\quad
  \begin{aligned}
  \begin{tikzpicture}[thick, yscale=0.4, xscale=0.4]
  \draw (-0.5,0) to (-0.5,3);
  \end{tikzpicture}
  \end{aligned}
  \qquad\qquad\qquad
  \delta^{-1}\circ\delta = \mbox{id}_A
  \end{equation}
  
  \vspace{-10pt}
  \begin{equation}
  \label{eq:sym}
\begin{aligned}
\begin{tikzpicture}[yscale=0.4, xscale=0.4]
\draw (0,-0.5) node [whitedot] {} to (0,0.5) node [whitedot] {} to [out=\nwangle, in=down] (-0.5,1.0) to [out=up, in=down] (0.5,2);
\draw (0,0.5) to [out=\neangle, in=down] (0.5,1) to [out=up, in=down] (-0.5,2);
\end{tikzpicture}
\end{aligned}
\quad=\quad
\begin{aligned}
\begin{tikzpicture}[yscale=0.4, xscale=0.4]
\draw (0,-0.5) node [whitedot] {} to (0,0.5) node [whitedot] {} to [out=\nwangle, in=down] (-0.5,1.0) to [out=up, in=down] (-0.5,2);
\draw (0,0.5) to [out=\neangle, in=down] (0.5,1) to [out=up, in=down] (0.5,2);
\end{tikzpicture}
\end{aligned}
  \qquad\qquad\qquad
  \nu\circ\delta\circ\epsilon^{-1} = \delta\circ\epsilon^{-1}
\end{equation}
\noindent where the crossing systems represent the relation that swaps the left and right hand systems, i.e. $\nu:=\{((a,b),(b,a))|a,b\in A\}$. In general, dagger-Frobenius algebras that obey~\eqref{eq:sym} are called \emph{symmetric}.
\end{definition}

As was noted in Lemma \ref{lem:sdfa-rel} these classical structures exactly correspond to groupoids. The map $\tinymult[whitedot]:A\times A\to A$ corresponds exactly to groupoid multiplication defined by Equation \ref{eq:groupoid_mult}. When these classical structures are defined with Hilbert spaces and linear maps instead of sets and relations they exactly correspond to orthonormal bases, as stated in Lemma \ref{lem:sdfa-hilb}.

Complementary classical structures can also be defined graphically. Here we color the maps for two different classical structures differently.
\begin{definition}[Complementarity]
\label{def:complementarity}
Two classical structures \whitecomonoid{A} and \graycomonoid{A} are \emph{complementary} when the following equation holds:
\begin{equation}
\label{eq:complementarity}
 \,\,
\begin{pic}[string, yscale=0.5, xscale=0.5]
\draw (-0.5,0.25) to (-0.5,1) node [graydot] {} to [out=left, in=right] (-1,2) node [graydot] {} to [out=left, in=right] (-1.5,1.5) node [whitedot] {} to [out=left, in=down] (-2,2) to [out=up, in=left] (-0.75,3) node (a) [whitedot] {} to [out=right, in=right] (-0.5,1);
\draw (a.center) to +(0,0.75);
\end{pic}
\quad=\quad\,\,\,
\begin{pic}[string, yscale=0.5, xscale=0.5]
\draw (0,0.25) to (0,1) node [graydot] {};
\draw (0,3) node [whitedot] {} to (0,3.75);
\end{pic}
\end{equation}
\end{definition}
This was shown to correspond to the usual notion of unbiased bases for classical structures in the category of Hilbert spaces and linear maps in~\cite{InteractQOb}.

\section{Unitary Oracles}

In order to model blackbox quantum algorithms in this setting, we must define the oracles themselves.
We do this by building up from an abstract definition of the controlled-not gate in the literature. Let the gray classical structure on a system $A$ be given by a basis $Z=\bigoplus^{|H|}G$ and the white classical structure be a basis $X=\bigoplus^{|G|}H$. The comonoid for the gray dot is then the relation $\tinycomult[graydot]:A\to A\times A$ that for $x,a,b\in H$ is given by
\[ \{(x,(a,b))~|~a\bullet_Zb=x\}. \]

\begin{definition}[\cite{zeng-unitary}]
\label{eq:generalizedcnot}
The abstract controlled-not is given by a composition of the comonoid for Z and the monoid for X:
\begin{equation}
\label{eq:cnot_rel}
\,\,
\begin{aligned}
\begin{tikzpicture}[yscale=0.4, xscale=0.4,string]
\node (b) [graydot] at (0,0) {};
\node (w) [whitedot] at (1,1) {};
\draw (-0.75,2) to [out=down, in=left] (b.center);
\draw (b.center) to [out=right, in=left] (w.center);
\draw (w.center) to (1,2);
\draw (b.center) to (0,-1);
\draw (w.center) to [out=right, in=up] (1.75,-1);
\end{tikzpicture}
\end{aligned}
\qquad \qquad\qquad
\begin{tabular}{l}
\mbox{CNOT:} $H\times H\to H\times H ::$ \\
$\{((x,y),(a,b\circ_Xy))~|~a\bullet_Zb=x\}.$ \\
\end{tabular} 
\end{equation}
\end{definition}
It can be shown that in the traditional quantum setting of Hilbert spaces and linear maps, this exactly corresponds to the usual controlled-not. This also leads to the following useful theorem, which can be abstractly proved.

\begin{theorem}[Complementarity via a unitary \cite{zeng-unitary}]
\label{thm:complementarityunitary}
  Two classical structures are complementary if and only if the abstract controlled-not from Definition \ref{eq:generalizedcnot} is unitary.
\end{theorem}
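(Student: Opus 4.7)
The plan is to prove the biconditional by direct graphical calculation. Set $\mu_W = \tinymult[whitedot]$, $\delta_W = \tinycomult[whitedot]$, $\mu_G = \tinymult[graydot]$, $\delta_G = \tinycomult[graydot]$. From the picture in Definition \ref{eq:generalizedcnot}, CNOT $= (\id \otimes \mu_W)\circ(\delta_G \otimes \id)$, and since the dagger reverses composition and swaps multiplications with comultiplications, CNOT$^{-1} = (\mu_G \otimes \id)\circ(\id \otimes \delta_W)$. Unitarity amounts to the two graphical identities CNOT $\circ$ CNOT$^{-1} = \id \otimes \id$ and CNOT$^{-1}\circ$ CNOT $= \id\otimes\id$.

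First I would compute CNOT $\circ$ CNOT$^{-1}$ and reorganize it as $(\id \otimes \mu_W) \circ ((\delta_G \circ \mu_G) \otimes \id) \circ (\id \otimes \delta_W)$. Applying the gray Frobenius law (Equation \ref{eq:frobenius}) rewrites the central gray cap $\delta_G \circ \mu_G$ as $(\mu_G \otimes \id)\circ(\id \otimes \delta_G)$. I then slide the resulting gray multiplication past the white monoid on its right using the dagger compact structure (snake equations from the compact duality that is implicit in any symmetric dagger Frobenius algebra). After this reorganization the diagram exhibits a single closed loop of alternating colors, which I expect to be, up to one more Frobenius/symmetry rewrite, precisely the left-hand side of the complementarity equation~(\ref{eq:complementarity}) tensored with a wire.

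For the \emph{if} direction, I would substitute the complementarity equation at that point, collapsing the loop to a gray unit followed by a white counit, and then use white and gray specialness (\ref{eq:special}) to contract the remaining bumps and obtain $\id \otimes \id$. For the \emph{only if} direction, I would start from CNOT $\circ$ CNOT$^{-1} = \id\otimes\id$ and isolate the loop by post-composing with $\mu_W^{-1} = \delta_W$ on one factor and with $\id$ on the other, then pre-composing with $\mu_G^{-1} = \delta_G$, using specialness to cancel the introduced bumps; this should deliver (\ref{eq:complementarity}) on the nose. Symmetry (\ref{eq:sym}) will be invoked to ensure that the second unitarity equation CNOT$^{-1}\circ$ CNOT = id produces no additional constraint beyond complementarity, since symmetric classical structures make the vertically flipped version of the calculation equivalent to the original.

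The main obstacle will be managing the graphical bookkeeping so that every rewrite genuinely uses only the Frobenius law, specialness, and symmetry of the two classical structures, and that the single loop produced after simplification truly matches the topology of the left-hand side of~(\ref{eq:complementarity}) rather than a superficially similar diagram. A subtler point is justifying that both unitarity conditions reduce to the \emph{same} complementarity equation; this is where symmetry of the Frobenius algebras plays an essential role, since without it one would a priori obtain two distinct conditions (one for each ordering of the dagger). Checking this step carefully—by explicitly verifying that the flipped loop equals the original via~(\ref{eq:sym})—will be the last detail to nail down.
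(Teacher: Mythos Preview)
The paper does not actually supply a proof of this theorem: it is quoted verbatim from~\cite{zeng-unitary} and immediately used as a black box to derive the next corollary about QCRel. So there is nothing in the paper to compare your argument against beyond the citation itself.

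That said, your outline is essentially the standard graphical proof one finds in the categorical quantum mechanics literature (and presumably in~\cite{zeng-unitary}): expand $\mathrm{CNOT}\circ\mathrm{CNOT}^\dagger$, apply the gray Frobenius law to $\delta_G\circ\mu_G$, and reduce the result to a two-colour loop tensored with an identity wire; then complementarity~\eqref{eq:complementarity} is exactly what collapses that loop. Your plan for the converse---capping and cupping with units/counits and invoking specialness to isolate the loop---is also the right idea.

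One point to watch: the left-hand side of~\eqref{eq:complementarity} as drawn in this paper is not simply $\mu_W\circ\delta_G$; there is additional structure on one leg (the intermediate gray and white dots), which in the source literature corresponds to an antipode built from the two Frobenius structures. Your reduction of $\mathrm{CNOT}\circ\mathrm{CNOT}^\dagger$ will naturally produce this extra piece once you bend one leg around using the induced compact structure, but you should check that the loop you obtain really matches~\eqref{eq:complementarity} and not the naive Hopf-style condition $\mu_W\circ\delta_G = \eta_W\circ\epsilon_G$. Your remark about symmetry~\eqref{eq:sym} handling the second unitarity equation is correct, but this is also where the antipode enters, so make that step explicit rather than leaving it as the ``last detail.''
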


\noindent This allows us to prove the following about  complementary bases in QCRel.
\begin{theorem}
Two bases (Z and X) in QCRel are complementary, in the sense of Theorem~\ref{thm:compl}, if and only if the relation in \eqref{eq:cnot_rel} is a bijection.
\end{theorem}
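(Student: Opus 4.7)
The plan is to obtain the result as a direct specialization of the abstract machinery developed earlier in the paper, rather than through any new calculation on the explicit relation in \eqref{eq:cnot_rel}. The statement is essentially a three-step chaining of previously stated results: the abstract characterization of complementarity via the controlled-not, the relational characterization of unitarity as bijectivity, and the groupoid-theoretic characterization of complementary bases in \textbf{Rel}.

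Concretely, I would first invoke Theorem~\ref{thm:complementarityunitary}: the classical structures corresponding to $Z$ and $X$ are complementary (in the graphical sense of Definition~\ref{def:complementarity}) if and only if the abstract controlled-not built from their (co)multiplications as in Definition~\ref{eq:generalizedcnot} is unitary. Since this theorem is proven abstractly in any dagger compact category, it applies in particular to \textbf{Rel}, where the (co)multiplications of the two classical structures are precisely the relations underlying the groupoids $Z = \bigoplus^{|H|}G$ and $X = \bigoplus^{|G|}H$, and the abstract controlled-not is precisely the relation displayed in \eqref{eq:cnot_rel}.

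Next, I would apply Corollary~\ref{cor:bijections}, which says a relation is unitary if and only if it is a bijection. Together with the previous step this yields: the classical structures are complementary iff the relation in \eqref{eq:cnot_rel} is a bijection. Finally, I would invoke Theorem~\ref{thm:compl} together with Lemma~\ref{lem:sdfa-rel}, which together identify the bases of a system in QCRel with abelian groupoids and characterize complementarity of such bases by the normal form $Z = \bigoplus^{|H|}G$ and $X = \bigoplus^{|G|}H$. This establishes that complementarity in the sense of Theorem~\ref{thm:compl} coincides exactly with complementarity of the associated classical structures, completing the chain of equivalences.

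There is not really a hard step here: the theorem is in effect a translation statement combining three tools already in hand. The only thing to be careful about is making sure the identifications go in both directions, i.e.\ that the classical structures appearing in Theorem~\ref{thm:complementarityunitary} are indeed the same ones appearing in the statement of Theorem~\ref{thm:compl}, so that ``complementary'' refers to a single notion throughout. This is handled by Lemma~\ref{lem:sdfa-rel}, which makes the correspondence between classical structures and groupoids (hence bases) canonical. With that identification, the ``if and only if'' is immediate by composing the three equivalences.
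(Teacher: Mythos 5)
Your proposal is correct and follows essentially the same route as the paper: the paper's proof simply observes that the relation in \eqref{eq:cnot_rel} is exactly the composite of the $Z$-comultiplication and $X$-multiplication from Definition~\ref{eq:generalizedcnot}, so the abstract Theorem~\ref{thm:complementarityunitary} applies, with Corollary~\ref{cor:bijections} converting unitarity into bijectivity. Your version is slightly more explicit in citing Lemma~\ref{lem:sdfa-rel} and Theorem~\ref{thm:compl} to align the two notions of complementarity, but the argument is the same.
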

\begin{proof}
The relevant relation can clearly be seen to be the composite in Definition~\ref{eq:generalizedcnot} as:
\begin{align}
\{((a,b,y),(a,b\circ_Xy))\} \circ \{((x,y),(a,b,y))~|~a\bullet_Zb=x\}.
\end{align}
Thus the abstract proof of\ Theorem \ref{thm:complementarityunitary} from \cite{zeng-unitary} goes through unchanged.
\end{proof}

An oracle is then introduced as a controlled-not where we have embedded a particular kind of relation that abstractly must be a self-conjugate comonoid homomorphism \cite{zeng-unitary}. We construct such relations in the following lemmas.

\begin{definition}
Let $G$ and $H$ be groupoids with with groupoid multiplications $\bullet_G$ and $\bullet_H$ respectively. Let $\mbox{id}_{G}=\bigcup_{X\in\mbox{Ob}(G)}\mbox{id}_X$ and similarly define $\mbox{id}_{H}$. A \emph{groupoid homomorphism relation} $R:G\to H$ obeys the following condition for $g_1,g_2\in G$:
\begin{align}
R(g_1\bullet_Gg_2) &= R(g_1)\bullet_HR(g_2) 
\end{align}
\end{definition}
\noindent Note that while this in many ways resembles a groupoid homomorphisms, it is actually a weakening of this notion, in that groupoid homomorphism relations are not required to be total functions and have no explicit requirement on their identity morphisms.

\begin{definition}
A \emph{monoid homomorphism relation} is a monoid homomorphism in the category of sets and relations. Specifically, let $A$ and $B$ be sets equipped with monoids $(A,\tinymult[whitedot],\tinyunit[whitedot])$ and $(B,\tinymult[blackdot],\tinyunit[blackdot])$ respectively. A relation $r:A\to B$ is a monoid homomorphism when it obeys the following two conditions:
\begin{align}
\label{eq:monone}
r\circ\tinymult[whitedot] &= \tinymult[blackdot]\circ(r\times r) 
\end{align}
\begin{align}
\label{eq:monunit}
r\circ\tinyunit[whitedot] &= \tinyunit[blackdot]
\end{align}
A \emph{comonoid homomorphism relation} is defined similarly, using duals of the above conditions.
\end{definition}

\begin{lemma}
\label{lem:mongrphom}
A groupoid homomorphism relation that is surjective on objects is a monoid homomorphism relation.
\end{lemma}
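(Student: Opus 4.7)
The strategy is to verify, in turn, the two defining equations \eqref{eq:monone} and \eqref{eq:monunit} of a monoid homomorphism relation, by unfolding each composite of relations at the level of elements.

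For \eqref{eq:monone}, I would compute both sides pointwise. For any $(g_1,g_2) \in G \times G$ and $h \in H$, the pair $((g_1,g_2),h)$ lies in $r \circ \tinymult[whitedot]$ iff $g_1$ and $g_2$ share a common component $G_i$ of $G$ and $h \in R(g_1 \bullet_G g_2)$; while it lies in $\tinymult[blackdot] \circ (r \times r)$ iff there exist $h_1 \in R(g_1)$ and $h_2 \in R(g_2)$ lying in a common component of $H$ with $h = h_1 \bullet_H h_2$, i.e.\ $h \in R(g_1) \bullet_H R(g_2)$. Adopting the natural convention that $R(g_1 \bullet_G g_2) = \emptyset$ when the product is undefined, these two descriptions coincide exactly with the defining equation of a groupoid homomorphism relation, so \eqref{eq:monone} holds immediately.

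For \eqref{eq:monunit}, I would unfold $\tinyunit[whitedot]$ as the relation $\{\bullet\} \to G$ picking out $\mbox{id}_G = \bigcup_{X \in \Ob(G)} \mbox{id}_X$, and similarly for $\tinyunit[blackdot]$. The desired equation then reduces to the set identity $\bigcup_X R(\mbox{id}_X) = \mbox{id}_H$. The inclusion $\supseteq$ is exactly what surjectivity on objects supplies: every identity in $H$ is covered by the image of some identity in $G$. For the reverse inclusion $\subseteq$, I would apply the groupoid homomorphism condition to the equation $\mbox{id}_X \bullet_G \mbox{id}_X = \mbox{id}_X$ to obtain $R(\mbox{id}_X) \bullet_H R(\mbox{id}_X) = R(\mbox{id}_X)$; since $\bullet_H$ is defined only within each component $H_j$, the intersection $R(\mbox{id}_X) \cap H_j$ is closed under the group multiplication of $H_j$ and equal to its own pointwise square there. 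Standard groupoid arguments (left-multiplication by any element is a bijection of a component, and every arrow admits an inverse) then force each such non-empty intersection to collapse to the singleton $\{\mbox{id}_Y\}$ for the matching image object $Y$ supplied by surjectivity on objects.

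The main obstacle I anticipate is precisely this final collapse in the $\subseteq$ direction: the homomorphism relation equation records only multiplicativity, not identity preservation, so one has to genuinely combine the idempotence $R(\mbox{id}_X) = R(\mbox{id}_X) \bullet_H R(\mbox{id}_X)$, the full groupoid structure on $H$, and surjectivity on objects to conclude that identities map to identities. Everything else in the proof is a routine translation between the abstract relational composites in \eqref{eq:monone}–\eqref{eq:monunit} and the element-wise groupoid data.
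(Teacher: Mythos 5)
Your treatment of the multiplication condition \eqref{eq:monone} is fine and is essentially the paper's: both are direct element-wise unfoldings of the relational composites, with the convention that an undefined groupoid product contributes the empty set. The genuine gap is the final step of your unit argument. From $R(\mbox{id}_X)=R(\mbox{id}_X)\bullet_H R(\mbox{id}_X)$ you conclude that each non-empty $S_j := R(\mbox{id}_X)\cap H_j$ collapses to the singleton $\{\mbox{id}_{H_j}\}$; but the identity $S_j = S_j\bullet_H S_j$, for $S_j$ a non-empty subset of a finite group $H_j$, characterises the \emph{subgroups} of $H_j$, not the trivial subgroup. (Given $s\in S_j$, the bijection $t\mapsto s\bullet t$ shows $s\bullet S_j=S_j$, whence $\mbox{id}\in S_j$ and $S_j$ is closed under products and inverses --- and that is all the ``standard groupoid arguments'' yield.) Concretely, take $G=\mathbb{Z}_1$, $H=\mathbb{Z}_2$ and $R=\{(e,0),(e,1)\}$: this satisfies $R(g_1\bullet_G g_2)=R(g_1)\bullet_H R(g_2)$ and is surjective on objects, yet $R(\mbox{id}_G)=\{0,1\}\neq\{0\}=\mbox{id}_H$, so \eqref{eq:monunit} fails. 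Identity preservation therefore cannot be derived from multiplicativity plus surjectivity on objects alone, and your $\subseteq$ inclusion does not close.

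You are right, however, that this is exactly the delicate point, and it is worth noting that the paper's own proof does not derive identity preservation either: it uses the step $R(\mbox{id}_X)=\mbox{id}_{R(X)}$ and justifies it as ``def.\ of group hom.\ rel.'', i.e.\ it reads the definition of a groupoid homomorphism relation as already acting on objects and sending identities to identities, despite the surrounding remark that there is ``no explicit requirement on their identity morphisms''. The honest repair is to take identity preservation as part of the hypothesis (or of the definition); with that in hand, your $\supseteq$ direction via surjectivity on objects and your element-wise verification of \eqref{eq:monone} complete the proof, and the idempotence detour becomes unnecessary.
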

\begin{proof}
Included in Appendix \ref{app:grphom}.
\end{proof}

We then dualize the proof of Lemma \ref{lem:mongrphom} to conclude that:
\begin{lemma}
\label{lem:classicalRelation}
Let $F:H\to G$ be a functor such that $F^{\mbox{\tiny op}}$ is a groupoid homomorphism relation that is surjective on objects. $F$ is a comonoid homomorphism relation.
\end{lemma}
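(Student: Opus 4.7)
The plan is to apply Lemma \ref{lem:mongrphom} to the relation $F^{\op}$ and then dualize the resulting equations by taking their relational converses. By hypothesis $F^{\op}\colon G \to H$ is a groupoid homomorphism relation that is surjective on objects, so Lemma \ref{lem:mongrphom} directly yields that $F^{\op}$ is a monoid homomorphism relation with respect to the monoid structures on $G$ and $H$ coming from their groupoid multiplications. Writing $(G, \tinymult[whitedot], \tinyunit[whitedot])$ for the source monoid and $(H, \tinymult[blackdot], \tinyunit[blackdot])$ for the target monoid, this gives
\begin{align*}
F^{\op}\circ\tinymult[whitedot] &= \tinymult[blackdot]\circ(F^{\op}\times F^{\op}), \\
F^{\op}\circ\tinyunit[whitedot] &= \tinyunit[blackdot].
\end{align*}

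The next step is to take the relational converse of both equations. Using $(R\circ S)^{-1}=S^{-1}\circ R^{-1}$, $(R\times S)^{-1}=R^{-1}\times S^{-1}$, together with $(F^{\op})^{-1}=F$, $(\tinymult[whitedot])^{-1}=\tinycomult[whitedot]$, $(\tinyunit[whitedot])^{-1}=\tinycounit[whitedot]$, and likewise for the black dot, the two equations become
\begin{align*}
\tinycomult[whitedot]\circ F &= (F\times F)\circ\tinycomult[blackdot], \\
\tinycounit[whitedot]\circ F &= \tinycounit[blackdot].
\end{align*}
These are precisely the duals of equations~(\ref{eq:monone}) and~(\ref{eq:monunit}), namely the comonoid homomorphism relation conditions for $F\colon H\to G$, so the claim follows.

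The only delicate point is pure bookkeeping: one has to check that the comonoid structures appearing after dualization on $G$ and $H$ really are the classical structures induced by their groupoid structures, which is automatic since by Lemma~\ref{lem:sdfa-rel} the classical structures on a set are exactly its abelian groupoid structures, with the comultiplication given by definition as the converse of the groupoid multiplication. No genuinely new content beyond Lemma~\ref{lem:mongrphom} is required, which is precisely what the author means by ``dualizing'' that proof.
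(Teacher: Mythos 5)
Your proposal is correct and follows exactly the route the paper intends: the paper gives no explicit proof, simply remarking that one ``dualizes'' the proof of Lemma~\ref{lem:mongrphom}, and your argument makes that precise by applying Lemma~\ref{lem:mongrphom} to $F^{\op}$ and taking relational converses of the two monoid homomorphism equations. The bookkeeping on the directions of the comultiplications and counits checks out, so nothing further is needed.
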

\noindent We call these comonoid homomorphism relations \emph{classical relations}. These are relations that properly preserve the structure of the bases where classical data is embedded.  In the quantum case they take basis elements to basis elements. Some examples in QCRel are listed in Appendix \ref{app:clRel}. In order to define unitary oracles, we also need these relations to be self-conjugate:

\begin{definition}[\cite{zeng-unitary}]
In a monoidal dagger-category, a comonoid homomorphism $f:\blackcomonoid{A} \to \graycomonoid{B}$ between dagger-Frobenius comonoids is \emph{self-conjugate} when the following property holds:
\begin{equation}
\label{eq:comonoidhomomorphismselfconjugate}
\begin{aligned}
\begin{tikzpicture}[yscale=0.5, xscale=0.5, thick]
\node [morphism] (f) at (2,1) {$f$};
\draw (0,-1) to [out=up, in=left, in looseness=0.9] (1,2) node [graydot] {} to (1,2.5) node [graydot] {};
\draw (1,2) to [out=right, in=up] (f.north);
\draw (f.south) to [out=down, in=left] (3,0) node [blackdot] {} to [out=right, in=down, out looseness=0.9] (4,3);
\draw (3,0) to (3,-0.5) node [blackdot] {};
\node [graydot] at (1,2) {};
\end{tikzpicture}
\end{aligned}
\quad=\quad
\begin{aligned}
\begin{tikzpicture}[yscale=0.6, xscale=0.5, thick]
\node (f) at (0,0) [morphism] {$f^{\dagger}$};
\draw (0,-1.5) to (f.south);
\draw (f.north) to (0,1.5);
\end{tikzpicture}
\end{aligned}
\end{equation}
\end{definition}
The meaning of this equation in relations is explicated in the following lemma.

\begin{lemma}
All classical relations $f:Z^A\to Z^B$ between groupoids $Z^A=\bigoplus^NG^A$ and $Z^B=\bigoplus^{N'}G^B$ are self-conjugate.
\end{lemma}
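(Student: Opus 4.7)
The plan is to unpack the self-conjugacy equation elementwise in \textbf{Rel}, reducing it to a symmetry condition that can then be verified using the monoid homomorphism structure of $f^{-1}$.

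First, I would interpret the diagram on the left-hand side of \eqref{eq:comonoidhomomorphismselfconjugate}. The two gray dots form a cap $\epsilon_B \circ \mu_B : B \times B \to I$ from the gray Frobenius algebra on $B$; the two black dots form a cup $\delta_A \circ \eta_A : I \to A \times A$ from the black Frobenius algebra on $A$; and $f$ sits between them as a snake. Tracing an input $b \in B$ through the snake: the cap imposes $f(a') \ni b^{-1}$ and the cup imposes $a = (a')^{-1}$, where inverses are the groupoid inverses. Since the right-hand side of the equation is $f^\dagger = f^{-1}$, self-conjugacy reduces to
\[
(a, b) \in f \iff (a^{-1}, b^{-1}) \in f
\]
for every $a \in Z^A$ and $b \in Z^B$.

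By Lemma \ref{lem:classicalRelation} together with Lemma \ref{lem:mongrphom}, $f^{-1}$ is a monoid homomorphism relation, so $f^{-1}(b_1 \bullet_B b_2) = f^{-1}(b_1) \bullet_A f^{-1}(b_2)$ and $f^{-1}(\mbox{id}_B) = \mbox{id}_A$. In particular $f^{-1}(e) \subseteq \mbox{id}_A$ for any $e \in \mbox{id}_B$. To prove the symmetry, suppose $(a, b) \in f$ with $a \in G_j^A$ and $b \in G_i^B$. The argument has three steps. (i) Apply multiplicativity to $b = b \bullet_B e_{G_i^B}$, obtaining $f^{-1}(b) = f^{-1}(b) \bullet_A f^{-1}(e_{G_i^B})$. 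The element $a \in f^{-1}(b) \cap G_j^A$ must therefore decompose as $a = a_1 \bullet_A a_2$ with $a_1, a_2 \in G_j^A$, $a_1 \in f^{-1}(b)$, and $a_2 \in f^{-1}(e_{G_i^B}) \cap G_j^A \subseteq \mbox{id}_A \cap G_j^A = \{e_{G_j^A}\}$; hence $e_{G_j^A} \in f^{-1}(e_{G_i^B})$. (ii) Apply multiplicativity to $b \bullet_B b^{-1} = e_{G_i^B}$ to get $f^{-1}(e_{G_i^B}) = f^{-1}(b) \bullet_A f^{-1}(b^{-1})$. Since $e_{G_j^A}$ lies in the left-hand side, there must exist a decomposition $e_{G_j^A} = a'_1 \bullet_A a'_2$ with $a'_1 \in f^{-1}(b)$, $a'_2 \in f^{-1}(b^{-1})$, and $a'_1, a'_2 \in G_j^A$; in particular $f^{-1}(b^{-1}) \cap G_j^A$ is nonempty. (iii) Pick any $a'' \in f^{-1}(b^{-1}) \cap G_j^A$. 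Then $a \bullet_A a'' \in f^{-1}(b) \bullet_A f^{-1}(b^{-1}) = f^{-1}(e_{G_i^B}) \subseteq \mbox{id}_A$, and since $a, a'' \in G_j^A$, this forces $a \bullet_A a'' = e_{G_j^A}$, i.e., $a'' = a^{-1}$. Hence $a^{-1} \in f^{-1}(b^{-1})$, so $(a^{-1}, b^{-1}) \in f$. The converse follows by applying the same argument to $(a^{-1}, b^{-1})$.

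The main subtlety is step (i): the counit equation $\epsilon_B \circ f = \epsilon_A$ in \textbf{Rel} only says $f(a) \cap \mbox{id}_B \neq \emptyset$ iff $a \in \mbox{id}_A$, so $f$ need not send every identity of $A$ to an identity of $B$ on the nose (the image $f(e_{G_j^A})$ may contain non-identities). The multiplicativity of $f^{-1}$ is precisely what pinpoints which specific identity of $A$ is compatible with $e_{G_i^B}$ under $f$, and this is exactly the hook needed to make step (ii) produce an element of $f^{-1}(b^{-1})$ in the correct component.
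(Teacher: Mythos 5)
Your reduction of the self-conjugacy diagram to the elementwise condition $(a,b)\in f \iff (a^{-1},b^{-1})\in f$ is exactly the reduction the paper performs by annotating the diagram with an element $(g,n)$, so the overall approach is the same. The difference is that the paper's proof stops there: after stating that $f^{-1}$ is a monoid homomorphism relation whose multiplication is the groupoid operation, it never actually derives the symmetry condition from that fact. Your steps (i)--(iii) supply precisely this missing derivation, and they are correct: evaluating multiplicativity of $f^{-1}$ at $(b, e_{G_i^B})$ pins down $e_{G_j^A}\in f^{-1}(e_{G_i^B})$, evaluating it at $(b,b^{-1})$ produces an element of $f^{-1}(b^{-1})$ in the component $G_j^A$, and the unit condition $f^{-1}(\mathrm{id}_B)\subseteq \mathrm{id}_A$ forces that element to be $a^{-1}$. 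Your closing remark about the subtlety of the counit condition (that $f$ need not send identities to identities on the nose, only meet $\mathrm{id}_B$ exactly on $\mathrm{id}_A$) is also a correct and relevant observation that the paper glosses over. In short: same route, but your version is the complete one.
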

\begin{proof}
In QCRel, our dagger-Frobenius structures are groupoids and, if they are complementary to some other groupoid, then they are of the form $Z^A=\bigoplus^NG$ and $Z^B=\bigoplus^{N'}H$. We annotate the definition of self-conjugacy for some arbitrary element $(g,n)$, the element $g$ from the $n$-th group. Recall from Section~\ref{sec:model} that $f^{\dagger}=f^{-1}$ in QCRel.
\begin{equation}
\begin{aligned}
\begin{tikzpicture}[yscale=0.7, xscale=1.1, thick]
\node [morphism] (f) at (2,1) {$f$};
\draw (0,-1) to [out=up, in=left, in looseness=0.9] (1,2) node [graydot] {} to (1,3.3) node [graydot] {};
\draw (1,2) to [out=right, in=up] (f.north);
\draw (f.south) to [out=down, in=left] (3,-0.7) node [blackdot] {} to [out=right, in=down, out looseness=0.9] (4,3);
\draw (3,-0.7) to (3,-2) node [blackdot] {};
\node [graydot] at (1,2) {};
\node at (0,-1.25) {\small $(g,n)$};
\node at (-0.25,1.5) {\small $(g,n)$};
\node at (0.9,2.65) {\small $\{(id_G,j) | 1 \leq j \leq N\}$};
\node at (2.5,1.7) {\small $(g^{-1},n)$};
\node at (2.1,-0.15) {\small $f^{-1}(g^{-1},n)$};
\node at (4.3,0) {\small $\left[f^{-1}(g^{-1},n)\right]^{-1}$};
\node at (2.95,-1.4) {\small $\{(id_H,k) | 1 \leq k \leq N'\}$};
\end{tikzpicture}
\end{aligned}
\quad=\quad
\begin{aligned}
\begin{tikzpicture}[yscale=0.9, xscale=1.1, thick]
\node (f) at (0,0) [morphism] {$f^{-1}$};
\draw (0,-1.5) to (f.south);
\draw (f.north) to (0,1.5);
\node at (0,-2) {\small $(g,n)$};
\node at (0,2) {\small $f^{-1}(g,n)$};
\end{tikzpicture}
\end{aligned}
\end{equation}
Thus, a relation $f$ is self-conjugate if and only if for all elements $(g,n)$ it is the case that $[f^{-1}(g^{-1},n)]^{-1}=f^{-1}(g,n)$. From Lemma \ref{lem:classicalRelation} the converse of the classical relation $f$ is a monoid homomorphism relation whose multiplication is the groupoid operation
\end{proof}

Classical relations, as self-conjugate comonoid homomorphisms, lead to unitary oracles.

\begin{definition}[Oracle \cite{zeng-unitary}]
\label{oracle}
Given a groupoid $Z^A:\blackcomonoid{A}$, a pair of complementary groupoids $Z^B:\graycomonoid{B}$ and $X^B:\whitecomonoid{B}$, and a classical relation $R : \blackcomonoid{A} \to \graycomonoid{B}$, an \emph{oracle} is defined to be the following endomorphism of $A \times B$:

\begin{tabularx}{\linewidth}{XX}
\[\begin{aligned}
\begin{tikzpicture}[string,xscale=0.6, yscale=0.4]
    \node (dot) [blackdot] at (0,1) {};
    \node (f) [morphism] at (0.7,2) {$R$};
    \node (m) [whitedot] at (1.4,3) {};
\draw (0,0.25)
        node [below] {$A$}%
    to (0,1)
    to [out=left, in=south] (-0.7,2)
    to (-0.7,3.75)
        node [above] {$A$};
\draw (0,1)
    to [out=right, in=south] (f.south);
\draw  (f.north)
    to [out=up, in=left] (1.4,3)
    to [out=right, in=up] +(0.7,-1)
    to (2.1,0.25)
        node [below] {$B$};;
\draw (m.center) to +(0,0.75) node [above] {$B$};
\end{tikzpicture}
\end{aligned}\]
& {\begin{align*}
&\mbox{\emph{OracleRel(R)}}:A\times B\to A\times B  ::\\
&\{((x,y),(a,c\circ_Xy))~|~ \\ &\hspace{50pt}\exists b\in A, s.t.~a\bullet_{Z^A}b=x\mbox{\emph{ and }} bRc\}.
\end{align*}}
\end{tabularx}
\end{definition}
\begin{theorem}
\label{thm:familyofunitaries}
Oracles are unitary.
\end{theorem}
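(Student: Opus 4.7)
The plan is to prove that the oracle of Definition~\ref{oracle} is a bijection by composing it with its converse and reducing the result graphically to the identity on $A \times B$. Since QCRel is a dagger compact category (Section~\ref{sec:model}), I can freely use the graphical calculus together with the algebraic laws of complementarity, the Frobenius equation, and the classical relation properties established just above the theorem statement.

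First I would form the composite $\mathrm{OracleRel}(R)^{-1} \circ \mathrm{OracleRel}(R)$. Taking adjoints amounts to flipping the oracle diagram upside down, so the composite's $A$-strand carries the black comultiplication followed by its adjoint (the black multiplication), and the $B$-strand carries the white multiplication followed by its adjoint (the white comultiplication), with an $R$ and an $R^{-1}$ meeting in the middle. Using the Frobenius law on the black dagger-Frobenius algebra on $A$ and on the white dagger-Frobenius algebra on $B$, I can slide these structures so as to isolate a single middle loop containing the $R, R^{-1}$ pair, separated from the outer $A$ and $B$ wires.

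Next, because $R$ is a classical relation, Lemma~\ref{lem:classicalRelation} gives that $R$ is a comonoid homomorphism from the black structure on $A$ to the gray structure $Z^B$ on $B$, and the preceding lemma establishes that $R$ is self-conjugate. Together these let me absorb the $R$ and $R^{-1}$ into the gray structure, replacing the middle loop with a pure gray-structure expression on $B$. What is left on the $B$-strand is precisely the configuration appearing in the complementarity equation \eqref{eq:complementarity} between $Z^B$ (gray) and $X^B$ (white), which collapses to a counit/unit pair. A final application of counitality on each side then leaves the identity on $A \times B$.

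The main obstacle I expect is bookkeeping: the diagrammatic reduction interleaves several applications of Frobenius, counitality, self-conjugacy, and the comonoid homomorphism law, and getting them in the correct order so that the complementarity equation \eqref{eq:complementarity} is applicable is the crux. Once the configuration lines up, complementarity does the essential collapsing work, exactly as in the abstract argument of \cite{zeng-unitary}. Because every manipulation uses only laws available in any dagger compact category, and because all required hypotheses (complementarity of $Z^B$ and $X^B$; self-conjugacy of $R$; the comonoid homomorphism property of $R$) have been verified for QCRel in the preceding lemmas, the abstract proof transfers verbatim. The symmetric composite $\mathrm{OracleRel}(R) \circ \mathrm{OracleRel}(R)^{-1}$ is handled identically by vertical reflection of the same argument, completing the proof that the oracle is unitary by Corollary~\ref{cor:bijections}.
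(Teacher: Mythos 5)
Your proposal is correct and follows essentially the same route as the paper: the paper's proof simply defers to the abstract diagrammatic argument of \cite{zeng-unitary}, which applies because the preceding lemmas establish that classical relations are self-conjugate comonoid homomorphisms, and your sketch is a faithful reconstruction of that same argument (Frobenius rearrangement, absorption of $R$, $R^{-1}$ via the homomorphism and self-conjugacy properties, collapse by complementarity, then counitality). Nothing is missing beyond the bookkeeping you already flag.
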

\begin{proof}
Proved in the abstract setting for Definition \ref{oracle} in \cite{zeng-unitary}, when $R$ is a self-conjugate comonoid homomorphism.  Though there are others, classical relations $R$ are necessary and sufficient in our cases as the algorithms that follow additionally require that the comonoids be part of classical structures.
\end{proof}

\begin{corollary}
OracleRel is a bijection.
\end{corollary}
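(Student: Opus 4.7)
The plan is to chain together two results already established in the excerpt: Theorem~\ref{thm:familyofunitaries}, which states that oracles are unitary relations, and Corollary~\ref{cor:bijections}, which states that a relation is unitary if and only if it is a bijection. Since \emph{OracleRel} is the instantiation in \textbf{Rel} of the abstract oracle construction of Definition~\ref{oracle}, Theorem~\ref{thm:familyofunitaries} tells us it is a unitary relation on the set $A\times B$. Applying Corollary~\ref{cor:bijections} then immediately yields that \emph{OracleRel} is a bijection $A\times B\to A\times B$.

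The only substantive point to check is that the hypotheses of Theorem~\ref{thm:familyofunitaries} are genuinely in force here, namely that the relation $R$ embedded in \emph{OracleRel} is a self-conjugate comonoid homomorphism. This has already been arranged: by Lemma~\ref{lem:classicalRelation} every classical relation is a comonoid homomorphism relation, and the self-conjugacy lemma preceding the definition of the oracle shows that all such classical relations between groupoids of the form required by Theorem~\ref{thm:compl} are self-conjugate. Thus the abstract hypotheses transfer to the concrete relational setting without further work.

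I would therefore present the corollary proof in two short sentences: invoke Theorem~\ref{thm:familyofunitaries} to conclude \emph{OracleRel} is unitary in QCRel, then invoke Corollary~\ref{cor:bijections} to convert unitarity in \textbf{Rel} into the statement that \emph{OracleRel} is a bijection. No calculation with the explicit set-builder description of \emph{OracleRel} given in Definition~\ref{oracle} is needed, since all the heavy lifting was done when proving the abstract unitarity result. There is no real obstacle; the corollary is a direct specialization of two earlier facts, and its role in the paper is mainly to restate Theorem~\ref{thm:familyofunitaries} in the concrete language of bijections that will be convenient for the algorithmic arguments to follow.
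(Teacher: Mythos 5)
Your proof is correct and matches the paper's own argument exactly: the corollary is obtained by combining Theorem~\ref{thm:familyofunitaries} (oracles are unitary) with Corollary~\ref{cor:bijections} (unitary relations are bijections). Your additional check that the embedded classical relation satisfies the self-conjugate comonoid homomorphism hypothesis is a sensible precaution, but it adds nothing beyond what the paper already establishes before stating the corollary.
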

\begin{proof}
This follows directly from Theorem \ref{thm:familyofunitaries} and Corollary \ref{cor:bijections}.
\end{proof}

\section{The Fourier Transform in Relations}
\label{sec:FT}

In these algorithms we use the quantum Fourier transform for relations \cite{strongCompFT}. This is a generalized quantum Fourier transform whose definition is motivated through the relationship between classical and unbiased states of two bases.  For abelian groups $G$ and $H$, consider two groupoids $Z=\bigoplus^{|H|}G$ and $X=\bigoplus^{|G|}H$ to be complementary bases of the same system.

\begin{definition}
\label{def:FTRel}
The \emph{quantum Fourier transform in relations} corresponds to preparing classical states of $Z$ and measuring them against classical states of $X$.
\end{definition}

\begin{example}
Take $G=\mathbb{Z}_2=\{0,1\}$, $H=\mathbb{Z}_1=\{\star\}$, $Z = G$ and $X=H\oplus H = \{ (\star,0),(\star,1) \}$. The computational basis is the family $\ket{H_g}_{g\in G}$ of classical states for $X$, i.e. $H_0 = \{(\star,0)\}$ and $H_1 = \{(\star,1)\}$. The quantum Fourier basis is a single classical state $G_\star = \{(\star,0), (\star,1)\}$ for $Z$. In this case all states can be prepared in the computational basis, but  measurement in the quantum Fourier basis is trivial.
\end{example}

\begin{example}
Take $G=\mathbb{Z}_2=\{0,1\}$, $H=\mathbb{Z}_2=\{a,b\}$, $Z = G \oplus G = \{ (0,0),(1,0),(0,1),(1,0)\}$ and $X= H \oplus H = \{ (a,a), (b,a), (a,b), (b,b) \}$. The computational basis is the family $\ket{H_g}_{g \in G}$ of classical states for $X$, i.e. $H_0 = \{(a,a),(b,a)\}$ and $H_1 = \{(a,b),(b,b)\}$. The quantum Fourier basis is the family $\ket{G_h}_{h\in H}$ of classical states for $Z$, i.e. $G_a = \{(0,0),(0,1)\}$ and $G_b = \{(1,0),(1,1)\}$.
\end{example}

See \cite{strongCompFT} to fully motivate this definition of the Fourier transform in QCRel and for its relationship to the usual Hadamard and Fourier transforms for Hilbert spaces and linear maps.

\section{The Deutsch-Jozsa Algorithm in QCRel}

The well known Deutsch-Jozsa algorithm is an early quantum algorithm that demonstrates a speedup over exact classical computation \cite{DJAlg1992}. It takes as input a function promised to be either constant or balanced and returns which, deterministically using only a single oracle query. In this section, we model the algorithm's steps in QCRel just as it is implemented with Hilbert spaces and linear maps. This approach is somewhat dual to the usual one where different algorithms are compared on the same problem. Here we run the same abstract protocol (implemented in a different model) with the same query complexity and compare the different problems that it solves.

To run this algorithm in QCRel we use two systems.  System $A$ has cardinality $n$ and system $B$ has cardinality $\ge 2$. Take $Z^A=\bigoplus^{|H^{A}|}G^A$ and $X^A=\bigoplus^{|G^{A}|}H^A$ to be complementary bases of $A$. Take $Z^B=\bigoplus^{|H^{B}|}G^B$ and $X^B=\bigoplus^{|G^{B}|}H^B$ to be complementary bases of $B$, such that $X^B$ has at least two classical states. In analogy with the usual specification, the algorithm proceeds with the following steps.
\begin{enumerate}
\item Prepare $A$ in the zero state $|G^{A}_0\rangle$. Prepare $B$ in the state given by the second classical state of $Z^B$, i.e. $|G^B_1\rangle$.

\item Apply the Fourier transform, as given by Definition \ref{def:FTRel}, to each system, resulting in states $\ket{H_0^A}$ and $\ket{H_1^B}$ respectively.

\item Apply an oracle (Definition~\ref{oracle}), built from a classical relation $f:Z^A\to Z^B$.

\item Again apply the Fourier transform to system $A$ and then measure it in the $Z$ basis.
\end{enumerate}

\noindent This sequence of steps is an instance in sets and relations of the abstract Deutsch-Jozsa algorithm from \cite{vicary-tqa}, which translates to the following relation where we have already applied the Fourier transform to the input and output systems:

\begin{equation}
\label{eq:reldj}
\begin{aligned}
\begin{tikzpicture}[string, yscale=0.7, xscale=0.8]
    \node (dot) [blackdot] at (0,1) {};
    \node (f) [morphism] at (0.7,2) {$f$};
    \node (m) [whitedot] at (1.4,3) {};
\draw (0,0.25)
        node [blackdot] {}
    to (0,1)
    to [out=left, in=270] (-0.7,2)
    to (-0.7,3.75)
        node [blackdot] {};
\draw (0,1)
    to [out=right, in=270] (f.south);
\draw  (f.north)
    to [out=up, in=left] (1.4,3)
    to [out=right, in=up] +(0.7,-1)
    to (2.1,0.25)
        node [graydot] {};
\node at (2.15,0.25) [anchor=west] {$|H^B_1\rangle$};
\node at (0.05,0.25) [anchor=west] {$|H^A_0\rangle$};
\node at (-0.65,3.75) [anchor=west] {$\langle H^A_0|$};
\draw (m.center) to +(0,0.75)
        node [above] {};
\draw [thin, dashed] (-1.25,0.7) to (7.5,0.7);
\draw [thin, dashed] (-1.25,3.3) to (7.5,3.3);
\node at (3.5,0) [anchor=west] {\small Prepare initial states and apply FT};
\node at (3.5,2) [anchor=west] {\small Apply a unitary map};
\node at (3.5,4) [anchor=west] {\small Apply FT and measure the first system};
\end{tikzpicture}
\end{aligned}
\end{equation}
that is explicitly written as:
\begin{align*}
\mbox{DJAlg}(f)&::\{\bullet\}\times \{\bullet\} \to \{\bullet\}\times B \\
&=
\bra{H_0^A}\times{\mbox{id}_B}\circ\mbox{OracleRel}(f)\circ\ket{H^A_0}\times\ket{H^B_1}
\\ &=
\{((\bullet,\bullet),(\bullet,z))\;|\; 
  z\in H_1^B \mbox{ and } \exists y\in H_0^A, \mbox{ s.t. }yfz\}.
\end{align*}

\begin{theorem}[\cite{vicary-tqa}]
\label{def:bc}
In any dagger compact category with complementary bases, the algorithm in Equation \ref{eq:reldj} will, with a single oracle query, distinguish \emph{constant} and \emph{balanced} classical relations $f:Z^A\to Z^B$ according to the following abstract definitions. Here $\ket{x}$ is a classical point of $Z^A$ and the zero scalar $0$ is, in \cat{Rel}, the empty relation:
\begin{equation}
\label{eq:bc}
\mbox{\\ constant}:\quad
\begin{aligned}
\begin{tikzpicture}[scale=0.8]
\node (f) [morphism] at (0,0) {$f$};
\draw (0,-1) to (f.south);
\draw (f.north) to (0,1);
\end{tikzpicture}
\end{aligned}
\;=\;
\begin{aligned}
\begin{tikzpicture}[scale=0.8]
\draw (0,-1) to (0,-.4)
    node [blackdot] {};
\draw (0,0.5) node [state] {$x$} to (0,1);
\end{tikzpicture}
\end{aligned}
=\ket{x}\circ\tinycounit[blackdot]
\qquad\qquad\qquad \mbox{balanced:\quad}
\begin{aligned}
\begin{tikzpicture}[string, scale=0.8]
\node [morphism] (f) at (0,0) {$f$};
\draw (0,-0.85) node [blackdot] {} to (f.south);
\draw (f.north) to (0,0.75) node [graydot, hflip] {};
\node at (0.05,0.75) [anchor=west] {$2$};
\end{tikzpicture}
\end{aligned}
\quad=\quad
0, \vspace{-5pt}
\end{equation}
where 
\begin{tikzpicture}[string, yscale=0.75]
\draw (f.north) to (0,0.75) node [graydot, hflip] {};
\node at (0.05,0.75) [anchor=west] {$2$};
\end{tikzpicture} is the dagger adjoint of the second classical state of $X^B$.
\end{theorem}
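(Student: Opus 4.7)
The plan is to verify the two cases separately by a graphical argument, and to rely on the fact that \cat{Rel} is dagger compact (established in the proof that QCRel is a model of quantum computation) so that every structural law used — complementarity~\eqref{eq:complementarity}, Frobenius~\eqref{eq:frobenius}, specialness~\eqref{eq:special}, and symmetry~\eqref{eq:sym} — holds for QCRel's classical structures by Lemma~\ref{lem:sdfa-rel} and Theorem~\ref{thm:compl}. Because of this, I expect the abstract diagrammatic argument of~\cite{vicary-tqa} to transport directly into QCRel.

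I would first expand the oracle in~\eqref{eq:reldj} using Definition~\ref{oracle}, rewriting $\mathrm{DJAlg}(f)$ as an explicit composite involving $\tinycomult[blackdot]$, $f$, $\tinymult[whitedot]$, the preparations $\ket{H^A_0}$, $\ket{H^B_1}$, and the effect $\bra{H^A_0}$. For the constant case I substitute $f = \ket{x}\circ\tinycounit[blackdot]$ into this composite: the $\tinycounit[blackdot]$ meets one output strand of $\tinycomult[blackdot]$ and, by counitality, collapses that strand to the identity on $A$. What remains factors as the tensor product of the inner product $\bra{H^A_0}\circ\ket{H^A_0}$ with the expression $\tinymult[whitedot]\circ(\ket{x}\otimes\ket{H^B_1})$ on $B$. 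The first factor is nonzero because $H^A_0$ is a nonempty subset of $A$; the second is nonzero because $\ket{x}$ is classical for $Z^B$ and hence unbiased for $X^B$ by Theorem~\ref{thm:compl}, and a direct computation with the groupoid multiplication of $X^B$ shows that multiplication by an unbiased state bijectively permutes classical $X^B$-states. Hence $\mathrm{DJAlg}(f)\neq 0$ when $f$ is constant.

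For the balanced case, the strategy is to rewrite the algorithm diagram until the hypothesised zero sub-diagram $\bra{H^B_2}\circ f\circ\tinyunit[blackdot]=0$ appears as a factor, whereupon $\mathrm{DJAlg}(f)=0$ follows because the empty relation is absorbing under composition in \cat{Rel}. Achieving this requires using the complementarity identity~\eqref{eq:complementarity} to re-express the $X$-classical boundary states $\ket{H^A_0}$, $\ket{H^B_1}$ and the effect $\bra{H^A_0}$ in terms of $Z^A$- and $Z^B$-structure, and then sliding the oracle's $\tinycomult[blackdot]$ and $\tinymult[whitedot]$ past the boundary using the Frobenius law~\eqref{eq:frobenius} and specialness~\eqref{eq:special} until $f$ ends up sandwiched between $\tinyunit[blackdot]$ on the $A$-side and the dagger of the second classical state of $X^B$ on the $B$-side.

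The main obstacle is this balanced direction: pinning down the precise sequence of applications of complementarity, Frobenius, specialness, and symmetry that produces the required factorisation. This is exactly the content of the abstract argument in~\cite{vicary-tqa}, carried out in any dagger compact category with complementary classical structures; since every step is an equation between morphisms that holds in QCRel, the conclusion transfers to our setting without modification.
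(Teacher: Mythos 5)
The paper never proves this statement: it is imported verbatim from \cite{vicary-tqa} (just as Theorem~\ref{thm:dj_speedup} is discharged by a one-line citation), so your attempt has to be measured against the content of that abstract argument rather than against anything in the text. Your constant case reproduces it correctly: substituting $f=\ket{x}\circ\tinycounit[blackdot]$, counitality collapses the $A$-wire to the scalar $\langle H^A_0|H^A_0\rangle$ and leaves $\tinymult[whitedot]\circ(\ket{x}\otimes\ket{H^B_1})$ on $B$, both nonzero. Note, though, that only counitality is needed on the $A$-side, because $\ket{H^A_0}=\tinyunit[blackdot]$ and $\bra{H^A_0}=\tinycounit[blackdot]$; the Frobenius, specialness and symmetry laws you list play no role there, and symmetry plays no role anywhere in the argument.

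The balanced case as you describe it has a concrete gap. The whole composite simplifies, again by counitality alone, to $\tinymult[whitedot]\circ\bigl((f\circ\tinyunit[blackdot])\otimes\ket{H^B_1}\bigr)$, whereas the balanced hypothesis is that the \emph{scalar} $\bra{H^B_1}\circ f\circ\tinyunit[blackdot]$ vanishes (your $\bra{H^B_2}$ is an off-by-one: the ``second classical state of $X^B$'' is $H^B_1$ in the paper's indexing, as its worked example confirms). That scalar does not literally occur as a compositional factor of the diagram, so ``the empty relation is absorbing under composition'' does not yet apply. The missing, load-bearing step is the identity $\tinymult[whitedot]\circ(\mathrm{id}_B\otimes\ket{H^B_1}) = \ket{H^B_1}\circ\bra{H^B_1}$ up to an invertible scalar, which holds because $\ket{H^B_1}$ is a copyable state of the white structure (use $\delta_{X^B}\ket{H^B_1}=\ket{H^B_1}\otimes\ket{H^B_1}$ with the Frobenius and specialness laws; or, in \cat{Rel}, simply observe that $u\bullet_{X^B}h$ with $h\in H^B_1$ is defined only when $u$ also lies in the factor group $H^B_1$). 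With that lemma both halves reduce to evaluating $\bra{H^B_1}\circ f\circ\tinyunit[blackdot]$, which is nonzero for constant $f$ by unbiasedness of $\ket{x}$ and zero by definition for balanced $f$. Without it, your balanced direction is a statement of intent that defers exactly the crucial step back to \cite{vicary-tqa} --- defensible, since the paper does the same, but the step should be identified rather than gestured at.
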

That these definitions coincide with the usual ones for constant and balanced functions is shown in \cite{vicary-tqa}. In QCRel, the effect 
\begin{tikzpicture}[string,scale=0.75]
\draw (f.north) to (0,0.75) node [blackdot, hflip] {};
\end{tikzpicture} is $\langle H^A_1|$, which acts as a measurement of system $A$ after applying the oracle.
We illustrate the details of the QCRel model of this algorithm by example and then with general definitions.

\begin{example}
Take $A=\{0,1,2,3\}$ and $B=\{a,b,c,d\}$ to be four element systems. We define complementary bases on these systems as the following:
\begin{align*}
\begin{tabular}{|c|c|}\hline
System $A$ & System $B$ \\\hline
$Z^A = \mathbb{Z}_2\oplus\mathbb{Z}_2 \mbox{~~s.t.}$ & $Z^B = \mathbb{Z}_2\oplus\mathbb{Z}_2\mbox{~~s.t.}$ \\
$G_0^A=\{0,1\},G_1^A=\{2,3\}$ & $G_0^B=\{a,b\},G_1^B=\{c,d\}$ \\ \hline
$X^A = \mathbb{Z}_2\oplus\mathbb{Z}_2 \mbox{~~s.t.}$ & $X^B = \mathbb{Z}_2\oplus\mathbb{Z}_2\mbox{~~s.t.}$ \\
$H_0^A=\{0,2\},H_1^A=\{1,3\}$ & $H_0^B=\{a,c\},H_1^B=\{b,d\}$ \\ \hline
\end{tabular}
\end{align*}

From Equation \ref{eq:bc}, we then define constant and balanced classical relations using the following dictionary:
\begin{align}
\begin{aligned}
\begin{tikzpicture}[string]
\draw (0,0.2) to (0,0.75) node [blackdot, hflip] {};
\node at (0.05,0.75) [anchor=west] {};
\end{tikzpicture}
\end{aligned}
&\quad= \{(0,\bullet),(2, \bullet)\},  \quad \mbox{the adjoint of the first classical state of } X^A \\
\begin{aligned}
\begin{tikzpicture}[string]
\node (x) [state, xscale=0.75] at (0,0) {$x$};
\draw (0,0) to (0,0.5) {};
\end{tikzpicture}
\end{aligned}
&\quad= \{(\bullet,a),(\bullet,b)\}\mbox{ OR }\{(\bullet,c),(\bullet,d)\},  \quad \mbox{a classical state of }Z^B \\
\begin{aligned}
\begin{tikzpicture}[string]
\draw (0,0.2) to (0,0.75) node [graydot, hflip] {};
\node at (0.05,0.75) [anchor=west] {$2$};
\end{tikzpicture}
\end{aligned}
&\quad=\{(b,\bullet),(d, \bullet)\},  \quad \mbox{ adjoint of the second classical state of } X^B \\
\begin{aligned}
\begin{tikzpicture}[string]
\node (x) [blackdot] at (0,0) {};
\draw (0,0) to (0,0.5) {};
\end{tikzpicture}
\end{aligned}
&\quad= \{(\bullet,0),(\bullet,2)\},  \quad \mbox{the first classical state of } X^A
\end{align}

Thus there are two constant classical relations\footnote{A list of more example classical relations is given in Appendix \ref{app:clRel}.} $f:Z^A\to Z^B$, one for each classical state of $Z^B$. They are:
\begin{align*}
\{ (0,a) ,(0,b), (2,a), (2,b) \} \quad \mbox{and} \quad
\{ (0,c) ,(0,d), (2,c), (2,d) \}.
\end{align*}
By Theorem~\ref{def:bc}, balanced classical relations are those which do not relate $0$ or $2$ to either $b$ or $d$. There are four balanced classical relations for this example:
\begin{align*}
\begin{tabular}{cc}
\{(0,c),(2,c),(1,d),(3,d)\} &\qquad \{(0,a),(1,b),(2,c),(3,d)\} \\
\{(2,a),(3,b),(0,c),(1,d)\} &\qquad \{(0,a),(2,a),(1,b),(3,b)\} \\
\end{tabular}
\end{align*}
For a classical relation promised to be in one of these two classes, we can distinguish which with a single oracle query.
\end{example}

We generalize these definitions of constant and balanced classical relations to the following:
\begin{definition}
\label{def:const}
Let $Z^A=\oplus^NG_i$. A \emph{constant relation} $f:Z^{A}\to Z^{B}$ relates all id$_{G_i}$ to a single classical state of $Z^B$.
\end{definition}

\begin{definition}
\label{def:balanced}
A relation $f:Z^{A}\to Z^{B}$ is \emph{balanced} when no element in the first classical state of $X^{A}$ is related to an element in the second classical state of $X^{B}$.
\end{definition}

\begin{theorem}
\label{thm:dj_speedup}
The Deutsch-Jozsa algorithm defined above distinguishes constant relations from balanced relations in a single oracle query.
\end{theorem}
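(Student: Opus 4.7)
My plan is to take the closed-form expression
\begin{equation*}
\text{DJAlg}(f) = \{((\bullet,\bullet),(\bullet,z))\mid z\in H_1^B,\ \exists y\in H_0^A\text{ with } yfz\}
\end{equation*}
already computed above for the algorithm, read off its image in the two promised cases, and conclude via the generalized Born rule (Axiom~\ref{ax:born}) that the two cases produce the distinguishable outcomes ``impossible'' and ``possible''. Since only one occurrence of $f$ appears in this expression, distinguishability via this measurement already amounts to a single-query protocol.

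The balanced case will be immediate: Definition~\ref{def:balanced} is exactly the negation of the defining predicate, so $\text{DJAlg}(f)$ is the empty relation and by Axiom~\ref{ax:born} the outcome is impossible. For the constant case I first identify $H_0^A$ with the set of identities $\{\mathrm{id}_{G_i^A}\}_i$ of $Z^A$: by Theorem~\ref{thm:compl} the classical states of $X^A$ are the unbiased states of $Z^A$, and the ``first'' one is the unbiased state indexed by the identity of $G^A$, namely the subset containing the identity of each copy $G_i^A$. Definition~\ref{def:const} then says that each $y=\mathrm{id}_{G_i^A}\in H_0^A$ is related by $f$ to the whole of some fixed classical state $G_j^B$ of $Z^B$, so the task reduces to exhibiting one $z\in G_j^B\cap H_1^B$. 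Applying Theorem~\ref{thm:compl} on the $B$ side, $G_j^B$ is also an unbiased state of $X^B$ and so meets each classical state $H_k^B$ in exactly one element; since by hypothesis $X^B$ has at least two classical states and hence $H_1^B$ exists, $G_j^B\cap H_1^B$ is a singleton $\{z_0\}$. Then $((\bullet,\bullet),(\bullet,z_0))\in \text{DJAlg}(f)$, so the relation is nonempty and the outcome is possible.

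The main obstacle will be the careful bookkeeping that identifies $H_0^A$ with the set of identities of $Z^A$ under complementarity (and symmetrically that reads off $|G_j^B\cap H_1^B|=1$ on the $B$ side); once these identifications are in hand, the remainder is a direct check using two applications of Theorem~\ref{thm:compl} together with Axiom~\ref{ax:born}.
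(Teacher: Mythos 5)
Your proof is correct, but it takes a genuinely different route from the paper. The paper's proof is a one-line appeal to the abstract categorical verification of the Deutsch--Jozsa algorithm in the cited reference: since Equation~\eqref{eq:reldj} is an instance of the abstract algorithm in a dagger compact category with complementary bases, and Theorem~\ref{def:bc} already records the abstract distinguishing property, the result ``follows immediately'' once Definitions~\ref{def:const} and~\ref{def:balanced} are taken as the relational instantiations of the abstract constant/balanced conditions of~\eqref{eq:bc}. You instead verify the claim directly inside \textbf{Rel}: balanced means precisely that no $y\in H_0^A$ is $f$-related to any $z\in H_1^B$, so the computed relation is empty (outcome impossible); constant means every identity $\mathrm{id}_{G_i^A}\in H_0^A$ is related to all of some classical state $G_j^B$ of $Z^B$, which by complementarity (Theorem~\ref{thm:compl}) is an unbiased state of $X^B$ and hence meets $H_1^B$ in exactly one element, so the relation is nonempty (outcome possible). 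Your identification of $H_0^A$ with the set of identities is exactly the one the paper itself uses later (in the Grover section), so that bookkeeping is sound. What the paper's route buys is brevity and reuse of the general theory --- which is, after all, the point of the paper; what your route buys is a self-contained, mechanism-revealing check that Definitions~\ref{def:const} and~\ref{def:balanced} really do separate the two outcomes, something the paper leaves implicit in the claim that these definitions instantiate~\eqref{eq:bc}. The only dependency worth flagging is that you take the set-builder expression for $\mathrm{DJAlg}(f)$ on faith from the paper's preceding display; that simplification of $\bra{H_0^A}\times\mathrm{id}_B\circ\mathrm{OracleRel}(f)\circ\ket{H_0^A}\times\ket{H_1^B}$ is itself a small computation with the groupoid multiplication that a fully self-contained proof would spell out.
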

\begin{proof}
This follows immediately from the abstract proof of the Deutsch-Jozsa algorithm in \cite{vicary-tqa}.
\end{proof}

This result shows that we are able to model the Deutsch-Jozsa algorithm in the nondeterministic classical setting of QCRel.
\vspace{-10pt}
\section{Single-shot Grover's Algorithm}

The usual Grover's algorithm~\cite{grover1996fast} takes as input a set $S$ and an indicator function $f:S\to\{0,1\}$ and outputs an element $s\in S$ such that $f(s)=1$. Though the algorithm is usually probabilistic and runs a repeated series of ``Grover steps", here we consider the deterministic version that runs with a single step. In this section we will consider the generalization of the single-shot Grover algorithm where the codomain of the indicator function is allowed to be an arbitrary group~\cite{vicary-tqa}. Our setup requires the set $S$, as one system, as well as another system $B$. We define the basis $Z^{S}=\bigoplus^{|H^S|}G^S$ and $X^S=\bigoplus^{|G^S|}H^S$ on the $S$ system.  System $B$ has complementary bases $Z^B=\bigoplus^{|H^B|}G^B$ and $X^B=\bigoplus^{|G^B|}H^B$. Let $\ket{\sigma}$ be the first classical state of $X^B$, e.g. is $X^B=\mathbb{Z}_2\oplus\mathbb{Z}_2$ then $\ket{\sigma}=\{(\star,1),(\star,3)\}$, where $1$ and $3$ are the non-identity elements of that factors of $X^B$. Let $\bra{\rho}$ be the converse of a classical state of $X^S$. Recall that $\ket{G^S_0}=\{(\star,g)\,|\,g\in G^S\mbox{ is the first factor group of } Z^S\}$ is a classical point of $Z^S$, and that, by the complementary relationship of classical and unbiased points (Section~\ref{sec:FT}), $\ket{H^S_0}\cong\{(\star,\idm{G^S_i})\,|\,G^S_i\mbox{ is a factor group of } Z^S\}$.

In QCRel, the algorithm proceeds by the following steps:
\begin{enumerate}
\item Prepare system $S$ in the state $\ket{G_0}$ and system $B$ in the state $\ket{\sigma} = \ket{0\vee 1}$.

\item Apply the Fourier transform to system $S$, resulting in state $\ket{H_0}$.

\item Apply the oracle for a classical indicator relation $f:Z^S\to Z^B$.

\item Apply a diffusion relation $D:S\to S$ to system $S$.

\item Measure system $S$ in the $X^S$ basis.

\end{enumerate}

The diagrammatic presentation for this procedure from \cite{vicary-tqa} is:
\begin{equation}
\label{eq:grovertopological}
\begin{aligned}
\begin{tikzpicture}[thick, xscale=0.7, yscale=0.35]
\begin{pgfonlayer}{foreground}
    \node (f) [smallbox, anchor=south, thick] at (0.7,2) {$f$};
\end{pgfonlayer}
    \node (dot) [blackdot] at (0,1) {};
    \node (m) [whitedot] at ([xshift=0.7cm, yshift=1cm] f.north) {};
\draw (0,-0.25)
        node [blackdot] (bdot) {}
    to (0,1)
    to [out=\nwangle, in=south] (-0.7,2)
    to ([yshift=1.4cm] m.center -| -0.7,1)
        node (rho) [state,hflip,yscale=1.5, xscale=1.1] {};
\node at (-0.7,6.3) {$\bra{\rho}$};
\draw (0,1)
    to [out=\neangle, in=south] (f.south)
    to (f.north)
    to [out=up, in=\swangle] +(0.7,1)
    to [out=\seangle, in=up] +(0.7,-1)
    to (2.1,-0.25)
        node [state,yscale=1.5, xscale=1.1] (sigmadag) {};
\node at (2.1,-1) {$\ket{\sigma}$};        
\draw (m.center) to (1.4,5.75)
        node [above] {};
\node [smallbox] at (-0.7,3.5) {$D$};
\node at (3.5,-0.25) [anchor=west] {Preparation};
\draw [thin, dashed] (-2,0.6) to (7,0.6);
\node at (3.5,2.25) [anchor=west] {Dynamics};
\draw [thin, dashed] (-2,4.5) to (7,4.5);
\node at (3.5,5) [anchor=west] {Measurement};
\end{tikzpicture}
\end{aligned}
\end{equation}
\vspace{-8pt}

\noindent where numerical scalars have been dropped as there is only one non-zero scalar in QCRel.
\footnote{Recall that scalars in a monoidal category with identity object $I$ are maps $s:I\to I$. Thus in \cat{Rel} $I=\{\star\}$, so the only scalars are the empty relation and the identity relation on the singleton set.} 
Recall that $\tinyunit[blackdot]:\{\star\}\to S$ relates the singleton to the elements of $H_0$ and that $\tinycounit[blackdot]$ is its relational converse. We will use the map $\tinyunit[blackdot]\circ\tinycounit[blackdot]:S\to S$ in the following definition. Here there is a special relation $D:S\to S$ called the diffusion operator and defined abstractly in \cite{vicary-tqa}:
\vspace{-10pt}
\begin{equation}
\label{eq:difftopological}
\begin{aligned}
\begin{tikzpicture}[thick, scale=0.5]
\draw (0,0) node [below] {$S$} to node [smallbox] {$D$} (0,2) node [above] {$S$};
\end{tikzpicture}
\end{aligned}
\hspace{5pt}\;=\;\hspace{5pt}
\hspace{-5pt}
\begin{aligned}
\begin{tikzpicture}[thick, scale=0.5]
\draw (0,0) node [below] {$S$} to (0,2) node [above] {$S$};
\end{tikzpicture}
\end{aligned}
\;-\;
\begin{aligned}
\begin{tikzpicture}[thick, scale=0.5]
\draw (0,0) node [below] {$S$} to (0,0.7) node [blackdot] {};
\draw (0,2) node [above] {$S$} to (0,1.3) node [blackdot] {};
\end{tikzpicture}
\end{aligned}
\vspace{-10pt}
\qquad\qquad \qquad D := \{(x,x)\,|\,x\in S\}\bigtriangleup (H_0\times H_0)
\end{equation}
where the subtraction of two relations is given by the symmetric difference of their images. Explicitly then, the relational model for Grover's algorithm is:
\begin{align*}
\mbox{Grover}(f)&:\{\bullet\}\times \{\bullet\} \to \{\bullet\}\times B \\
&=
\bra{\rho}\times{\mbox{id}_B}\circ D\times{\mbox{id}_B}\circ\mbox{OracleRel}(f)\circ\ket{H^S_0}\times\ket{\sigma}
\\ &= \{((\bullet,\bullet),(\bullet,c\circ_{X}x))\;|\; \\
&\hspace{40pt}
 x\in\sigma,y\in\rho, \mbox{id}_{G_n},b,z\in S \mbox{ s.t. } z\bullet_{Z^S} b=\mbox{id}_{G_{n}} \mbox{ and } bfc,zDy\}
\end{align*}

\begin{theorem}
\label{thm:topgrovers}
Equation \ref{eq:grovertopological} is zero only for classical states of $X^S$ denoted $\ket{\rho}$ that satisfy the following equation:
\begin{equation}
\label{eq:zerocondition}
\sigma \circ f \circ \rho = \sigma \circ f \circ \ket{G_0}
\end{equation}
\end{theorem}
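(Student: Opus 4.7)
The plan is to expand the Grover composite in Equation~\ref{eq:grovertopological} by decomposing the diffusion operator via Equation~\ref{eq:difftopological}, and then to evaluate each resulting piece set-theoretically and compare.

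First, I would substitute $D = \idm{S} \,\triangle\, (\tinyunit[blackdot]\circ\tinycounit[blackdot])$, identifying $\tinyunit[blackdot]$ and $\tinycounit[blackdot]$ with $\ket{H_0}$ and $\bra{H_0}$ respectively. Since composition in \cat{Rel} distributes over union, the Grover output decomposes as a symmetric difference
\[
\mbox{Grover}(f) \;=\; T_1 \,\triangle\, T_2,
\]
where $T_1 = (\bra{\rho}\otimes\idm{B}) \circ \mbox{OracleRel}(f) \circ (\ket{H_0}\otimes\ket{\sigma})$ and $T_2 = ((\bra{\rho}\circ\ket{H_0}\circ\bra{H_0})\otimes\idm{B}) \circ \mbox{OracleRel}(f) \circ (\ket{H_0}\otimes\ket{\sigma})$. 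The Grover output is the empty relation precisely when $T_1 = T_2$ as subsets of $B$.

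Second, I would unfold $T_1$ by applying Definition~\ref{oracle}. Using that $\ket{H_0}$ is the set $\{\idm{G^S_i}\}_i$ of identity elements of the factor groups of $Z^S$ and that $\sigma$ is a group under $\bullet_{X^B}$, the set $T_1$ simplifies to either $\sigma$ or $\emptyset$, being non-empty exactly when the scalar $\bra{\sigma}\circ f\circ\ket{\rho}$ is the identity. This step uses self-conjugacy of $f$ to rewrite the condition $y^{-1}fc$ as $yfc^{-1}$, together with closure of $\sigma$ under the $\bullet_{X^B}$-inverse.

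Third, I would evaluate $T_2$. The scalar $\bra{\rho}\ket{H_0}$ vanishes unless $\rho=H_0$, since distinct classical states of $X^S$ are disjoint. In the $\rho=H_0$ case, the remaining composite is analyzed using the Fourier duality between $G_0$ and $H_0$ from Definition~\ref{def:FTRel}, together with the abstract diagrammatic identities used in~\cite{vicary-tqa}, reducing $T_2$ to either $\sigma$ or $\emptyset$, where non-emptiness is detected by whether the scalar $\bra{\sigma}\circ f\circ\ket{G_0}$ is the identity.

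Finally, the equality $T_1 = T_2$ translates into the scalar equation $\bra{\sigma}\circ f\circ\ket{\rho} = \bra{\sigma}\circ f\circ\ket{G_0}$, which is exactly Equation~\ref{eq:zerocondition}. I expect the third step to be the main obstacle: cleanly extracting $\bra{\sigma}\circ f\circ\ket{G_0}$ out of $T_2$ requires threading the Frobenius identity, complementarity, and self-conjugacy of $f$ through the oracle, while handling the case split on whether $\rho=H_0$ with care.
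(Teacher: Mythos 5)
The paper's own ``proof'' is a one-line appeal to the abstract verification in \cite{vicary-tqa}, so your attempt at a direct relational computation is a genuinely different route; unfortunately it breaks at the first step. You justify the decomposition $\mbox{Grover}(f)=T_1\bigtriangleup T_2$ by saying that composition in \cat{Rel} distributes over union, but the diffusion operator is defined via \emph{symmetric difference}, not union, and relational composition does not distribute over symmetric difference: if two elements of the intermediate set both relate to the same output, the relational composite records it once, while the XOR of the two branches cancels it. The failure is not hypothetical here. Take $\rho=H_0$, a perfectly legitimate classical state of $X^S$. Then $\langle\rho|H_0\rangle$ is the identity scalar, so $T_2=T_1$ and your decomposition yields $T_1\bigtriangleup T_2=\emptyset$ for \emph{every} $f$. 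But the actual composite uses $D=\{(x,x)\,|\,x\in S\}\bigtriangleup(H_0\times H_0)$ as a single relation (this is exactly how the paper's explicit formula for $\mbox{Grover}(f)$ uses it, via the clause ``$zDy$''), and one computes directly that $\bra{H_0}\circ D=\bra{H_0}$ whenever $|H_0|\ge 2$, so that $\mbox{Grover}(f)=T_1$, which is generically non-empty. Your argument would therefore wrongly certify $\rho=H_0$ as always giving the zero outcome. The decomposition is only sound when $\rho\cap H_0=\emptyset$, i.e.\ when one branch is empty; the case $\rho=H_0$ must be handled by computing $\bra{\rho}\circ D$ directly as a relation (or by working in a setting where composition genuinely distributes over the additive structure, which \cat{Rel} with $\bigtriangleup$ is not).

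Beyond this, the second and third steps are asserted rather than carried out: the reduction of $T_1$ to the scalar $\sigma\circ f\circ\rho$, and especially the extraction of $\sigma\circ f\circ\ket{G_0}$ from $T_2$ (which you yourself flag as the main obstacle), are precisely where the Frobenius, complementarity and self-conjugacy identities must be threaded through the oracle --- and that is the content of the computation in \cite{vicary-tqa} (Section 3.2, equation (34)) that the paper simply cites. As it stands the proposal is a plan with an invalid opening move and two unproved reductions, not a proof.
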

\begin{proof}
Proven in \cite{vicary-tqa}. See Section 3.2 equation (34).
\end{proof}

Here $\ket{\sigma}$ is, in general, any fixed classical state of $X^B$. This allows a generalization of the single-shot Grover's algorithm where the cardinality of system $B$ is increased as investigated in \cite{vicary-tqa}.
Consequently, the LHS of Equation \ref{eq:zerocondition} tests if any element in the classical state $\ket{\rho}$ is related to any of the elements in $\ket{\sigma}$. The RHS tests if any of the elements of $G_0$ are related to $\ket{\sigma}$.

\begin{proposition}
The QCRel single-shot Grover algorithm only returns states $\ket{\rho}$ such that for all $h \in H^S_0$, $s\in\rho$ and $x\in \sigma$
\begin{align*}
h f x \quad = \quad \neg(s f x) .
\end{align*}
In other words, the only elements can be possibilistically measured (via the QCRel Born rule in Axiom~\ref{ax:born}) are elements of $S$ that have the opposite mapping to $\sigma$, under the relation $f$, than elements of $H^{S}_0$.
\end{proposition}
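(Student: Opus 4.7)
The plan is to apply Theorem~\ref{thm:topgrovers} to reduce the question to an analysis of the scalar equation~\eqref{eq:zerocondition}, then to unpack that condition elementwise using the structure of the relevant classical states and the structure of $f$ as a classical relation.

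First, by the generalized Born rule (Axiom~\ref{ax:born}) applied to Equation~\eqref{eq:grovertopological}, a classical state $\ket{\rho}$ of $X^S$ is possibilistically measurable only when that diagram is non-zero. By Theorem~\ref{thm:topgrovers} this in turn forces $\sigma \circ f \circ \rho \neq \sigma \circ f \circ \ket{H^S_0}$; here the abstract state $\ket{G_0}$ of the cited theorem is, in the present setup, the state $\ket{H^S_0} = \tinyunit[blackdot]$ fed into the oracle in Equation~\eqref{eq:grovertopological}.

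Second, I would read the scalars off elementwise. A composite $\bra{\sigma}\circ f\circ\ket{\psi}$ of states and effects in \cat{Rel} is a map $\{\bullet\}\to\{\bullet\}$, hence is either the empty or the identity scalar, and it is the identity exactly when there exist $s\in\psi$ and $x\in\sigma$ with $sfx$. The inequality from the first step therefore breaks into two symmetric cases: either (a) some $s\in\rho,\,x\in\sigma$ satisfy $sfx$ while no $h\in H^S_0,\,x\in\sigma$ satisfies $hfx$, or (b) vice versa.

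Third --- the main work --- I would promote these existence statements to the universal identity $hfx = \neg(sfx)$ claimed in the proposition. The key structural inputs are: $\rho$, being a classical state of $X^S$, is by Theorem~\ref{thm:compl} an unbiased state of $Z^S$ of the form $\bigoplus^N \{g\}$ for a fixed $g\in G^S$; $H^S_0$ is the corresponding identity unbiased state $\bigoplus^N\{\mathrm{id}\}$; and $f$ is a classical (\emph{i.e.}\ self-conjugate comonoid-homomorphism) relation whose converse is a groupoid homomorphism surjective on objects (Lemmas~\ref{lem:classicalRelation} and~\ref{lem:mongrphom}). Using these together one shows that, for each fixed $x\in\sigma$, the truth value of $sfx$ does not depend on which factor-copy of $g$ is chosen as $s$, and likewise $hfx$ is constant as $h$ ranges over the factor-copies of $\mathrm{id}$ in $H^S_0$. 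Either case (a) or case (b) then delivers the claimed pointwise ``opposite mapping'' identity, and the informal reading in terms of the QCRel Born rule follows immediately.

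The main obstacle is this third step: the condition coming from Theorem~\ref{thm:topgrovers} is inherently existential, whereas the proposition asserts a universal pointwise statement. The argument must leverage the comonoid-homomorphism identities for $f$ together with the uniform factor-wise structure of classical and unbiased states to conclude that $sfx$ cannot depend on the factor index of $s$, and similarly for $hfx$. Once this uniformity is established, the dichotomy from the second step yields the full claim.
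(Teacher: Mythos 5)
Your first two steps are a faithful (and usefully more explicit) expansion of what the paper actually does: its entire proof is the single sentence that the proposition ``instantiates'' Theorem~\ref{thm:topgrovers}, so reading the theorem's $\ket{G_0}$ as the black unit $\ket{H^S_0}$ and unpacking the scalar inequality $\sigma\circ f\circ\rho\neq\sigma\circ f\circ\ket{H^S_0}$ into your two existential cases (a)/(b) is exactly the intended content. (Like the paper, you quietly use the converse of the stated implication --- ``zero only for'' gives zero $\Rightarrow$ equality, whereas you need non-zero $\Rightarrow$ inequality; this is harmless only because the underlying computation in the cited reference is an exact equality of scalars, hence an ``if and only if''.)

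The genuine problem is your third step. The uniformity you rely on --- that for fixed $x\in\sigma$ the truth value of $sfx$ is independent of which factor-copy $s\in\rho$ is chosen --- is false for classical relations in general, and the universal identity $hfx=\neg(sfx)$ cannot be derived from the existential dichotomy; indeed it fails on the paper's own data. Take $S=B=\{0,1,2,3\}$ with the bases of the Grover example, $\sigma=\{1,3\}$, $H^S_0=\{0,2\}$, and the classical relation $f=\{(2,0),(3,1),(0,2),(1,3)\}$ from Appendix~\ref{app:clRel}: then $\sigma\circ f\circ\ket{H^S_0}$ is the zero scalar while $\sigma\circ f\circ\rho$ is not for $\rho=\{1,3\}$, so the algorithm returns $\ket{\rho}$; yet $1f1$ is false while $3f1$ is true, so $sfx$ is not uniform in $s$, and the universal claim fails (it would force $1f1$ to be true). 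Even the paper's first worked example $f=\{(0,2),(2,2),(1,3),(3,3)\}$ breaks the universal statement, again because $1f1$ is false. So the proposition is only correct under the existential reading --- ``some element of $\rho$ is $f$-related to some element of $\sigma$ iff no element of $H^S_0$ is'', i.e.\ precisely the negation of \eqref{eq:zerocondition} --- which is exactly what your step two already delivers. The right move is to stop there and flag the universal, pointwise phrasing as an overstatement of Theorem~\ref{thm:topgrovers}; no comonoid-homomorphism identity will rescue it.
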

\begin{proof}
Theorem \ref{thm:topgrovers} gives an abstract proof and this proposition can be seen to instantiate it by the definitions given here.
\end{proof}

\begin{example}
Let $S=\{0,1,2,3\}$ and choose $Z^S=\mathbb{Z}_2\oplus\mathbb{Z}_2$ and $X^S=\mathbb{Z}_2\oplus\mathbb{Z}_2$ as $G$ (black) and $H$ (white) bases respectively, so that $G_0=\{0,1\}$ and $H_0 = \{0,2\}$. Let $B$ be the four element system with the same bases and choose $\ket{\sigma}=\ket{1\vee3}$. The diffusion operator is then given by
\begin{align*}
D &:= \{(0,0),(1,1),(2,2),(3,3)\}-\{(0,0),(0,2),(2,0),(2,2)\}
\\&=\{(1,1),(3,3),(0,2),(2,0)\}.
\end{align*}

In this case, $D$ happens to be a bijection, it is a unitary relation and thus a possible evolution in QCRel.\footnote{This will not be the case whenever $S$ has more than two factor groups. Unitarity is a stringent condition on processes in QCRel.} Let $f$ be the classical relation\footnote{See Appendix \ref{app:clRel} for a list of classical relations $\mathbb{Z}_2\oplus\mathbb{Z}_2\to\mathbb{Z}_2\oplus\mathbb{Z}_2$.} $\{(0,2),(2,2),(1,3),(3,3)\}$, where elements of $H^S_0$ are not related to elements of $\ket{\sigma}$. Thus the above algorithm will only return classical states of $X^{S}$ that \textit{are} related, under $f$,\ to $\ket{\sigma}$.  The only possible outcome state is $\ket{1\vee 3}$.
\end{example}

\begin{example}
This is the same as the above example, but take $f$ to be the classical relation $\{(0,0),(2,0),(0,1),(2,1)\}$. As an element of $H^S_0$ is related to $\ket{\sigma}$, the algorithm will return classical states of $X^S$ which are \emph{not} mapped to $\ket{\sigma}$, i.e. the state $\ket{1\vee3}$.
\end{example}

\section{The Groupoid Homomorphism Promise Algorithm}

This section models the group homomorphism algorithm from \cite{zeng-unitary} in QCRel.  The quantum version of the algorithm, which operates in $\cat{FHilb}$, takes as input a blackbox function $f:G\to A$ promised to be one of the homomorphisms between group $G$ and abelian group $A$.  It then outputs the identity of the homomorphism. In that paper the full identification algorithm is built up by multiple calls to an instance of the problem for cyclic groups.\footnote{Making use of the structure theorem for abelian groups to complete the general case.} It is this cyclic group subroutine that we consider here. In the relational setting we will move from groups to groupoids. Let groupoid $H$ be complementary to groupoid $G$ and groupoid $B$ be complementary to groupoid $A$. The QCRel GroupHomID algorithm then takes as input a groupoid isomorphism $f:G\to A$.   Let $\ket{\rho}$ be a classical states of $H$, and $\ket{\sigma}$ be a classical state of $B$.

The algorithm has the following abstract specification~\cite{zeng-unitary}:
\begin{align}
\label{eq:groupdIDAlg}
\begin{aligned}
\begin{tikzpicture}[string, scale=0.5, xscale=1.3]
    \node (dot) [blackdot] at (0,1) {};
    \node (f) [morphism] at (0.7,2) {$f$};
    \node (m) [whitedot] at (1.4,3) {};
    \node (topsig) [state, hflip] at (-0.7,3.6) {$\ket{\rho}$};
\draw (0,0)
        node [blackdot] {}
    to (0,1)
    to [out=left, in=south] (-0.7,2)
    to (topsig);
\draw (0,1)
    to [out=right, in=south] (f.south);
\draw  (f.north)
    to [out=up, in=left] (1.4,3)
    to [out=right, in=up] +(0.7,-1)
    to (2.1,0.4)
        node [state] {$\ket{\sigma}$};
\draw (m.center) to (1.4,4.4)
        node [above] {};
\draw [thin, dashed] (-1.25,0.7) to (7.5,0.7);
\draw [thin, dashed] (-1.25,3.3) to (7.5,3.3);
\node at (5,0) [anchor=west] {Prepare initial states};
\node at (5,2) [anchor=west] {Apply a unitary map};
\node at (5,4) [anchor=west] {Measure the left system};
\end{tikzpicture}
\end{aligned}
\end{align}
Let the factor groups of a groupoid $G$ be denoted $G_n$. This gives the following relational model for the algorithm:
\begin{align*}
\mbox{GroupHomID}(f)&:\{\bullet\}\times \{\bullet\} \to \{\bullet\}\times B \\
&=
\bra{\rho}\times{\mbox{id}_B}\circ \mbox{OracleRel}(f)\circ\ket{H_0}\times\ket{\sigma}
\\ &= \{((\bullet,\bullet),(\bullet,c\circ_{X}x))\;|\; \\
&\hspace{40pt}
x\in\sigma,y\in\rho, \mbox{id}_{G_n},b\in A \mbox{ s.t. } y\bullet_G b=\mbox{id}_{G_{n}} \mbox{ and } bfc\}
\end{align*}

\begin{theorem}
The algorithm defined by~\eqref{eq:groupdIDAlg} has output state $\ket{\rho}$ only when for some $x\in \rho$ and some $y\in \sigma$ we have $(y,x)\in f$.
\end{theorem}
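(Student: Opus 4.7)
The plan is to unfold the composite defining $\mbox{GroupHomID}(f)$ as a single explicit relation and then read off when its image is non-empty. By the Generalized Born Rule (Axiom~\ref{ax:born}), non-emptiness of the resulting relation $\{\bullet\}\times\{\bullet\}\to\{\bullet\}\times B$ is precisely the condition for the measurement outcome $\ket{\rho}$ to be possible, which is what ``output state $\ket{\rho}$'' means.

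Concretely, I would substitute the explicit forms of the preparations, the oracle, and the effect. By the complementary classical/unbiased correspondence (Section~\ref{sec:FT}), $\ket{H_0}\cong\{\mathrm{id}_{G_n}\mid G_n\text{ a factor of }G\}$ consists of one identity per factor of $G$, while $\ket{\sigma}$ is itself an entire factor of the groupoid $B$ on system~2. Plugging $\ket{H_0}\times\ket{\sigma}$ into Definition~\ref{oracle}, the image of the oracle step consists of pairs $(a,c\circ_B v)$ with $v\in\sigma$, $(b,c)\in f$, and $a\bullet_G b=\mathrm{id}_{G_n}$ for some factor $G_n$ containing both $a$ and $b$.

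Two structural observations now pin down the non-emptiness condition. First, because $\circ_B$ is the partial groupoid multiplication, $c\circ_B v$ is defined only when $c$ lies in the same factor of $B$ as $v$; since $\sigma$ is itself such a factor, this forces $c\in\sigma$. Second, the constraint $a\bullet_G b=\mathrm{id}_{G_n}$ makes $b$ the $G_n$-inverse of $a$. Applying the final effect $\bra{\rho}\times\mbox{id}_B$ additionally demands $a\in\rho$. Collecting these conditions, the composite is non-empty precisely when there exist $x\in\rho$ and $y\in\sigma$ together with an element $b\in G$ in the factor of $x$ satisfying $b=x^{-1}$ and $(b,y)\in f$; this yields the theorem's pairing between $\rho$ and $\sigma$ via $f$ once the groupoid-homomorphism property from Lemma~\ref{lem:classicalRelation} is used to handle the inversion on $x$.

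The main obstacle is careful bookkeeping across the four interacting groupoids $(G,H)$ on system~1 and $(A,B)$ on system~2: each has its own factor decomposition, partial multiplication, and inversion, and one must track precisely how a factor $\sigma\subseteq B$ and a factor $\rho\subseteq H$ interact with an oracle whose internal structure mixes $G$ on one side and $B$ on the other. Once this structural alignment is pinned down, the remainder is a mechanical composition calculation together with an appeal to the Born rule.
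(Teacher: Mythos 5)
Your proposal is correct in substance but takes a genuinely different route from the paper. The paper's proof is a one-line appeal to the diagrammatic simplification of \eqref{eq:groupdIDAlg} carried out in the cited reference, which collapses the whole circuit to a scalar $\bra{\rho}\circ f^{-1}\circ\ket{\sigma}$ tensored with $\ket{\sigma}$, and then reads the condition off that scalar. You instead unfold the composite element-by-element using the explicit relational form of Definition~\ref{oracle}, which is more self-contained and, in my view, more informative: your observations that $\ket{H_0}$ contributes one identity per factor of $G$, that the partiality of $\circ_B$ forces $c\in\sigma$ because $\sigma$ is a whole factor group of $B$, and that $a\bullet_G b=\mathrm{id}_{G_n}$ forces $b=a^{-1}$, are all exactly right, and they yield the precise non-emptiness condition $\exists\, x\in\rho,\ y\in\sigma$ with $(x^{-1},y)\in f$. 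The one step you assert rather than carry out is the removal of that inverse: the relevant tool is not really Lemma~\ref{lem:classicalRelation} but the self-conjugacy property \eqref{eq:comonoidhomomorphismselfconjugate}, which gives $(x^{-1},y)\in f\iff(x,y^{-1})\in f$ and therefore only relocates the inverse onto the $\sigma$ side (and $\sigma$, being an unbiased state of the gray groupoid, is not closed under gray inverses). So strictly your computation proves the theorem with $\rho$ replaced by its set of groupoid inverses (equivalently $\sigma$ by its gray-inverse image); but the paper's own statement and its cited simplification elide exactly the same inverse, so your derivation is at least as precise as the published argument, and arguably exposes the bookkeeping the paper hides inside the external reference.
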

\begin{proof}
The verification in \cite{zeng-unitary} simplifies the algorithm in Equation \ref{eq:groupdIDAlg} to:
\begin{equation}
\begin{aligned}
\begin{tikzpicture}[string, scale=0.9]
\node (r) [state] at (0,2) {$\sigma$};
\node (s) [state, hflip] at (0,3.5) {$\rho$};
\node (f) [morphism] at (0,2.75) {$f^{-1}$};
\node (r2) at (1.5,3) [state] {$\sigma$};
\draw (s) to (f.north);
\draw (f.south) to (r);
\draw (r2) to +(0,0.5);
\end{tikzpicture}
\end{aligned}
\qquad= \quad \{(\bullet,(\bullet,x))\;|\;x f^{-1} y\;\mbox{for some } y\in\rho\},
\end{equation}
where we see that post-selection on the left hand system implies the theorem's condition.
\end{proof}

\begin{theorem}
If $f$ is a groupoid isomorphism then the algorithm in Equation \ref{eq:groupdIDAlg} returns all states.
\end{theorem}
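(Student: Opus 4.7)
The plan is to reduce the claim to the simplified output derived in the proof of the preceding theorem. There the algorithm was shown to produce the relation $\{(\bullet,(\bullet,x)) : xf^{-1}y \text{ for some } y\in\rho\}$, so the statement ``the algorithm returns state $\ket{\rho}$'' is equivalent to this set being non-empty. The goal is then to show that for every classical state $\ket{\rho}$ of $H$, this set is non-empty whenever $f$ is an isomorphism.

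First I would unpack what a groupoid isomorphism $f:G\to A$ means at the level of the underlying sets: it is a functor that is bijective on both objects and morphisms, so as a relation $f\subseteq G\times A$ it is the graph of a set-theoretic bijection. In particular $f$ and $f^{-1}$ are total as relations. Next, for any classical state $\ket{\rho}$ of $H$, the subset $\rho$ is non-empty, since (by the discussion preceding Section~\ref{sec:FT}) each factor group of $H$ contains at least its identity, so every classical state has at least one element. Picking $y\in\rho$ and setting $x:=f(y)\in A$ yields $(x,y)\in f^{-1}$, i.e., $xf^{-1}y$, and therefore $(\bullet,(\bullet,x))$ lies in the output set. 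Since $\rho$ was arbitrary, every classical state of $H$ arises as a possible measurement outcome, proving that the algorithm returns all states.

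The main obstacle is conceptual rather than computational: the whole argument hinges on the simplification of the oracle diagram given in the previous theorem's proof (imported from~\cite{zeng-unitary}), which uses Frobenius-algebra manipulations to absorb the initial state $\ket{H_0}$ and the subsequent $X^B$-multiplication with $\ket{\sigma}$. One must verify that this simplification carries over verbatim when $f$ is specifically a groupoid isomorphism, since the original abstract derivation was phrased for self-conjugate comonoid homomorphisms. Once that step is granted, the remaining work is the trivial witness construction above, relying only on the bijectivity of $f$ and the non-emptiness of classical states of $H$.
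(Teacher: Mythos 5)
Your argument is correct and is essentially the paper's own proof, which simply observes that a groupoid isomorphism, as a relation, is total in both directions; you have just made the witness explicit (pick $y\in\rho$, take $x=f(y)$, so $x\,f^{-1}\,y$) and noted that classical states are non-empty. The caveat you raise about re-verifying the diagrammatic simplification for isomorphisms is unnecessary, since every groupoid isomorphism is in particular a classical relation and hence a self-conjugate comonoid homomorphism, so the simplification from the preceding theorem applies directly.
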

\begin{proof}
Groupoid isomorphisms relate every element of the domain to some element in the codomain and relate every element of the codomain to some element of the domain.
\end{proof}

\noindent Still, we can imagine running the algorithm from \eqref{eq:groupdIDAlg} where any classical relation $f$ is allowed as input to obtain non-trivial outcomes.

\begin{appendix}
\section{Appendix: Proof of Lemma \ref{lem:mongrphom}}
\label{app:grphom}
\begin{proof}
Throughout this proof we refer to a groupoid as a category where the elements of the groupoid are the morphisms.  From this perspective a group is a groupoid with a single object. Consider a groupoid homomorphism relation $R:G\to H$ on objects $X,A,B$ of $G$ and morphisms $f$ of $G$.
In order to show that $R$ is a monoid homomorphism relation we first show that it preserves the unit~\eqref{eq:monunit}. We have $R(\bigcup_{X\in\mbox{Ob}(G)} \mbox{id}_X) = \bigcup_{Y\in\mbox{Ob}(H)} \mbox{id}_Y$. Recall that for a set $A$,  $R(A)=\bigcup_{a\in A} R(a)$. It is that case that
\begin{align}
R(\bigcup\nolimits_{X\in\mbox{Ob}(G)} \mbox{id}_X) &= \bigcup\nolimits_{X\in\mbox{Ob}(G)}R(\mbox{id}_X) = \bigcup\nolimits_{X\in\mbox{Ob}(G)}\mbox{id}_{R(X)} \quad \mbox{def. of group hom. rel.} \\
&= \bigcup\nolimits_{R(X)\in\mbox{Ob}(G)}\mbox{id}_{R(X)} = \bigcup\nolimits_{{Y\in\mbox{Ob}(H)}}\mbox{id}_{Y} \qquad \mbox{surjective on objects}
\end{align}
where we have used the fact that $R$ is surjective on objects, which implies that every object of $H$ is in the image of $R$ and that $|\mbox{Ob}(G)|\ge|\mbox{Ob}(H)|$.

The second monoid homomorphism condition~\eqref{eq:monone} is to preserve multiplication, i.e. that for subsets $K$ and $J$ of $G$ we have
\begin{align}
R(K+_GJ)=R(K)+_HR(J).
\end{align}

\noindent Here we recall that for two sets $A$ and $B$, $A+B=\{a + b | a \in A\mbox{ and } b \in B\}$. Thus,
\begin{align}
R(K+_GJ) &= R(\bigcup\nolimits_{k\in K,j\in J}k+_Gj) = \bigcup\nolimits_{k\in K,j\in J}R(k+_Gj) \\
&= \bigcup\nolimits_{k\in K,j\in J}R(k)+_HR(j) \qquad \mbox{def. of group hom. rel.}\\
&= R(K)+_HR(J).
\end{align}
This completes the proof.
\end{proof}

\section{Appendix: Classical Relations}
\label{app:clRel}

In this appendix we list examples of classical relations as calculated by a Mathematica package available at:
\url{https://github.com/willzeng/GroupoidHomRelations}
\begin{align*}
\begin{tabular}{lcl}
Classical relations $\mathbb{Z}_3\to\mathbb{Z}_3$:  & \hspace{50pt} & Classical relations $\mathbb{Z}_4\to\mathbb{Z}_4$:  \\[-10pt]
\begin{tabular}{l}
\{(0,0),   (0,1),   (0,2)\} \\
\{(0,0),(1,1),(2,2)\} \\
\{(0,0),(1,2),(2,1)\}
\end{tabular} & &
\begin{tabular}{l}
\\
\{(0,0),(0,1),(0,2),(0,3)\} \\
\{(0,0),(1,1),(2,2),(3,3)\} \\
\{(0,0),(2,1),(0,2),(2,3)\} \\
\{(0,0),(3,1),(2,2),(1,3)\}
\end{tabular}
\end{tabular}
\end{align*}

\noindent The classical relations from  $\mathbb{Z}_2\oplus\mathbb{Z}_2\to\mathbb{Z}_2\oplus\mathbb{Z}_2$ are:
\begin{align*}
\begin{tabular}{cc}
\{(0,2),(2,2),(1,3),(3,3)\} &\qquad \{(0,0),(1,1),(2,2),(3,3)\} \\
\{(0,2),(2,2),(1,3),(2,3)\} &\qquad \{(0,0),(1,1),(2,2),(2,3)\} \\
\{(0,2),(2,2),(0,3),(3,3)\} &\qquad \{(0,0),(0,1),(2,2),(3,3)\} \\
\{(0,2),(2,2),(0,3),(2,3)\} &\qquad \{(0,0),(0,1),(2,2),(2,3)\} \\
\{(2,0),(3,1),(0,2),(1,3)\} &\qquad \{(0,0),(2,0),(1,1),(3,1)\} \\
\{(2,0),(3,1),(0,2),(0,3)\} &\qquad \{(0,0),(2,0),(1,1),(2,1)\} \\
\{(2,0),(2,1),(0,2),(1,3)\} &\qquad \{(0,0),(2,0),(0,1),(3,1)\} \\
\{(2,0),(2,1),(0,2),(0,3)\} &\qquad \{(0,0),(2,0),(0,1),(2,1)\} \\
\end{tabular}
\end{align*}

\end{appendix}

\bibliographystyle{splncs03}
\bibliography{QAlgRel}

\end{document}